\let\origtheorem\theorem
\let\origproof\proof
\let\theorem\origtheorem
\let\proof\origproof
\newcommand{\true}{\mathit{true}}
\newcommand{\false}{false}
\newcommand{\lfun}{\longrightarrow}
\newcommand{\qed}{\hfill$\Box$}
\newcommand{\card}[1]{\lvert #1 \rvert}
\newcommand{\e}{\emptyset}
\renewcommand{\plus}{\mathbin{\scriptstyle\sqcup}}
\newcommand{\w}{\{\cdot \plus \cdot\}}
\newcommand{\disj}{\parallel}
\newcommand{\Ndisj}{\not\disj}
\renewcommand{\Cup}{un}
\renewcommand{\Cap}{inters}
\newcommand{\Ncup}{nun}
\newcommand{\Size}{size}
\newcommand{\Diff}{diff}
\newcommand{\CARD}{\card{\cdot}}
\newcommand{\setlog}{$\{log\}$\xspace}
\newcommand{\CLPSET}{CLP($\mathit{SET}$)\xspace}
\newcommand{\LCARD}{\mathcal{L}_{\CARD}}
\newcommand{\SATCARD}{\mathit{SAT}_{\CARD}}
\newcommand{\LZa}{\mathcal{L}_{Za}}
\newcommand{\Zarba}{\mathit{SAT}_{Za}}
\newcommand{\STEPCARD}{\mathsf{solve\_size}}
\newcommand{\Var}{\mathcal{V}}
\newcommand{\Set}{\mathsf{S}}
\newcommand{\Int}{\mathsf{Z}}
\newcommand{\FSet}{\mathcal{F}_\Set}
\newcommand{\PiSet}{\Pi_\mathsf{S}}
\newcommand{\Ur}{\mathsf{U}}
\newcommand{\FUr}{\mathcal{F}_\Ur}
\newcommand{\FInt}{\mathcal{F}_\Int}
\newcommand{\PiInt}{\Pi_\Int}
\newcommand{\sSet}{\mathsf{Set}}
\newcommand{\sInt}{\mathsf{Int}}
\newcommand{\sUr}{\mathsf{Ur}}
\newcommand{\TCARD}{\mathcal{T}_{\CARD}}
\newcommand{\TInt}{\mathcal{T}_\Int}
\newcommand{\TUr}{\mathcal{T}_\Ur}
\newcommand{\CCARD}{\mathcal{C}_{\CARD}}
\newcommand{\map}{\mathcal{Z}}
\newcommand{\iS}{\mathcal{R}}
\newcommand{\iF}[1]{(#1)^\iS}
\newcommand{\q}{\text{\normalfont\'{}}}
\newcommand{\ql}{\hspace{5pt}\text{\normalfont\'{}}}
\newcommand{\qr}{\text{\normalfont\'{}}\hspace{5pt}}
\newcommand{\why}[1]{\tag*{{\footnotesize [by #1]}}}
\newcommand{\by}[1]{{\footnotesize [by #1]}}
\newtheorem{definition}{Definition}
\newtheorem{example}{Example}
\newtheorem{remark}{Remark}
\newtheorem{theorem}{Theorem}
\newtheorem{lemma}{Lemma}
\title[Integrating Cardinality Constraints into CLP with Sets]{Integrating Cardinality Constraints into Constraint Logic Programming with Sets}
\author[M. Cristi\'a and G. Rossi]
{MAXIMILIANO CRISTI\'A \\
Universidad Nacional de Rosario \\
Argentina    \\
E-mail: cristia@cifasis-conicet.gov.ar
\and
GIANFRANCO ROSSI \\
Universit\`a di Parma \\
Italy    \\
E-mail: gianfranco.rossi@unipr.it
}
\begin{document}
\label{firstpage}

\maketitle

\begin{abstract}
Formal reasoning about finite sets and cardinality is important for many
applications, including software verification, where very often one needs to
reason about the size of a given data structure. The Constraint Logic
Programming tool \setlog provides a decision procedure for deciding the
satisfiability of formulas involving very general forms of finite sets,
although it does not provide cardinality constraints. In this paper we adapt
and integrate a decision procedure for a theory of finite sets with cardinality
into \setlog. The proposed solver is proved to be a decision procedure for its
formulas. Besides, the new CLP instance is implemented as part of the \setlog
tool. In turn, the implementation uses Howe and King's Prolog SAT solver and
Prolog's CLP(Q) library, as an integer linear programming solver. The empirical
evaluation of this implementation based on +250 real verification conditions
shows that it can be useful in practice.

\medskip\noindent
\emph{Under consideration in Theory and Practice of Logic Programming (TPLP)}
\end{abstract}

\begin{keywords}
\setlog, set theory, cardinality, formal verification, constraint logic programming
\end{keywords}

%\tableofcontents

%%%%%%%%%%%%%%%%%%%%%%%%%%%%%%%%%%%%%%%%%%%%%%%%%%%%%%%%
\section{Introduction}
%%%%%%%%%%%%%%%%%%%%%%%%%%%%%%%%%%%%%%%%%%%%%%%%%%%%%%%%

Set theory is a well-established vehicle for formal modeling, specification,
analysis and verification of software systems. Formal notations such as B
\cite{Abrial00} and Z \cite{Spivey00} and tools such as ProB \cite{Leuschel00},
Atelier-B \cite{atelierb} and Z/EVES \cite{Saaltink01} are good examples of
that claim. Hence, it is important to extend the capabilities of existing tools
and develop new ones for set theory as applied in the context of verification.
Besides, when these methods and tools are used for formal verification and
analysis, it is necessary to discharge a number of verification conditions or
proof obligations. Then, tools capable of automating such proofs are essential
to render the development process cost-effective. Decision procedures play a
key role in proof automation. Indeed, if a decision procedure exists for a
fragment of set theory, then it would be possible to automate the proofs of
verification conditions lying in this fragment.

\setlog (read `setlog') \cite{DBLP:journals/jlp/DovierOPR96,setlog} is a Constraint Logic
Programming (CLP) language and satisfiability solver implemented in Prolog
providing: \emph{i)} a decision procedure for the algebra of \emph{hereditarily
finite sets}, i.e., finitely nested sets that are finite at each level of
nesting \cite{Dovier00}; \emph{ii)} a decision
procedure for a very expressive fragment of the class of finite set relation
algebras (\citeANP{DBLP:journals/jar/CristiaR20} \citeyearNP{DBLP:journals/jar/CristiaR20,DBLP:conf/RelMiCS/CristiaR18}); and
\emph{iii)} a decision procedure for restricted intensional sets (RIS) (\citeANP{DBLP:conf/cade/CristiaR17} \citeyearNP{DBLP:journals/jar/CristiaR21a,DBLP:conf/cade/CristiaR17}). Several
in-depth empirical evaluations provide evidence that \setlog is able to solve
non-trivial problems
(\citeANP{DBLP:journals/jar/CristiaR20} \citeyearNP{DBLP:journals/jar/CristiaR21a,DBLP:journals/jar/CristiaR20,DBLP:conf/RelMiCS/CristiaR18,DBLP:conf/cade/CristiaR17,CristiaRossiSEFM13}),
in particular as an automated verifier of security properties
(\citeANP{DBLP:journals/jar/CristiaR21} \citeyearNP{DBLP:journals/jar/CristiaR21,Cristia2021}). All of these decision
procedures are based on the notion of \emph{set unification} \cite{Dovier2006}.

In this paper we add to \setlog a decision procedure for the algebra of finite
sets extended with cardinality constraints. This extension is important in
terms of formal software verification because there are situations where we
need to reason about the size of a given data structure and not only about what
its elements are. For example, within the algebra of finite sets one can
partition a given set into two disjoint subsets, $C = A \cup B \land A \cap B =
\emptyset$, but there is no way to state that $A$ and $B$ must be of the same
cardinality. In practice these constraints might appear, for instance, when
part of a data container must be put into a cache---a simple \setlog program is
shown in \ref{app:datacontainer}. Specifically, cardinality constraints appear
in the verification of some distributed algorithms
\cite{DBLP:conf/cav/BerkovitsLLPS19,Alberti2017} and are at the base of the
notions of integer interval, array and list.

At an abstract level, the new decision procedure combines the decision
procedure for the algebra of finite sets already existing in \setlog with a
decision procedure for sets with cardinality constraints proposed by
\citeN{DBLP:conf/frocos/Zarba02}.  Zarba proves that a theory of finite sets
equipped with the classic set theoretic operators, including cardinality,
combined with linear integer constraints is decidable. In his work, Zarba is
interested in proving a decidability result; as far as we know Zarba's
algorithm has never been implemented before. In fact, the new decision
procedure first uses all the power of \setlog to produce a simplified,
equivalent formula that can be passed to Zarba's algorithm which makes a final
judgment about its satisfiability, in case it contains cardinality constraints.
In this way, \setlog performs as well as before on the class of formulas it was able to
deal with previously.

As a consequence of the fact that the new decision procedure is still based on
set unification, it can deal with sets of sets nested at any depth. For
example, the decision procedure is able to give all possible
solutions for a goal such as $\card{\{\{x\},\{y,z\}\}} = n$, where $x$, $y$,
$z$ and $n$ are variables.

Zarba's algorithm is implemented by integrating the Prolog Boolean SAT solver
developed by \citeN{DBLP:journals/tcs/HoweK12} with SWI-Prolog's implementation
of the CLP(Q) system \cite{holzbaur1995ofai}. As a result the implementation
integrates three Prolog-based systems: Howe and King's SAT solver, CLP(Q) and
\setlog.

Solving formulas over a theory of sets and cardinality is not new
\cite{DBLP:conf/cade/FerroOS80,Gervet01}. However, our proposal clearly
distinguishes itself from all previous works in some aspects that constitute
the main contributions of this paper: \emph{a)} our implementation is deeply
rooted in the CLP framework and thus inherits all its properties; in
particular, \setlog preserves its features as a CLP language and as a
satisfiability solver; \emph{b}) our CLP system produces a finite
representation of all possible solutions of any satisfiable formula of its
input language; \emph{c}) as the decision procedure is based on set unification
it handles set elements of any kind including nested sets; and \emph{d)} this
is the first implementation of Zarba's algorithm and it is shown to perform
better than some other systems.

\paragraph{Structure of the paper.}
Section \ref{language} presents the syntax and semantics of the constraint
language for finite sets with cardinality constraints. The overall structure of
the constraint solver for that language is introduced in Section \ref{satcard}.
The main routine dealing with cardinality constraints is presented in Section
\ref{sizesolver}, where we also include a description of Zarba's algorithm. In
Section \ref{decproc} we prove that the resulting solver is indeed a decision
procedure for our language. Besides deciding the satisfiability of cardinality
formulas, the solver is able to find a particular form of their solutions, as
we explain in Section \ref{minimal}. Section \ref{impl} shows how \setlog works
with cardinality constraints, in particular in the context of formal
verification (Section \ref{formver}); an empirical evaluation is also reported
(Section \ref{empirical}). We compare our approach with others in Section
\ref{related}. Some concluding remarks are provided in Section \ref{concl}.

%%%%%%%%%%%%%%%%%%%%%%%%%%%%%%%%%%%%%%%%%%%%%%%%%%%%%%%%
\section{\label{language}$\LCARD$: a language for finite sets and cardinality}
%%%%%%%%%%%%%%%%%%%%%%%%%%%%%%%%%%%%%%%%%%%%%%%%%%%%%%%%

In this section we describe the syntax and semantics of our set-based language
$\LCARD$ (read `l-card'). This is a quantifier-free first-order predicate
language with three distinct sorts: the sort $\sSet$ of all terms denoting
sets, the sort $\sInt$ of terms denoting integer numbers, and the sort $\sUr$
of all other terms. Terms of each sort are allowed to enter in the formation of
set terms (in this sense, the designated sets are hybrid), no nesting
restrictions being enforced (in particular, membership chains of any finite
length can be modeled). A handful of reserved predicate symbols endowed with a
pre-designated set-theoretic meaning is available. The usual linear integer
arithmetic operators are available as well. Formulas are built in the usual way
by using conjunction and disjunction. A few more complex operators (in the form
of predicates) are defined as $\LCARD$ formulas, thus making it simpler for the
user to write complex formulas.

\subsection{Syntax}\label{syntax}

The syntax of the language is defined primarily by giving the signature upon
which terms and formulas are built.

\begin{definition}[Signature]\label{signature}
The signature $\Sigma_{\CARD}$ of $\LCARD$ is a triple $\langle
\mathcal{F},\Pi,\Var\rangle$ where:
\begin{itemize}
\item $\mathcal{F}$ is the set of constants and function symbols along with their sorts, partitioned as
      $\mathcal{F} \defs \FSet \uplus \FInt \uplus \FUr$, where
      $\FSet \defs \{\e,\plus\}$, $\FInt = \{0,-1,1,-2,2,\dots\} \cup \{+,-,*\}$
      and $\FUr$ is a set of uninterpreted constant and function symbols.
\item $\Pi$ is the set of predicate symbols along with their sorts, partitioned as
$\Pi \defs \Pi_{=} \cup  \PiSet \cup \Pi_{\Size} \cup \PiInt$, where $\Pi_{=}
\defs  \{=,\neq\}$, $\PiSet \defs \{\in,\notin,\Cup,\disj\}$, $\Pi_{\Size}
\defs \{\Size\}$, and  $\PiInt \defs \{\leq\}$.
%and $\PiSort \defs \{\set,\Nset,\integer,\Ninteger\}$.
%
\item $\Var$ is a denumerable set of variables partitioned as
$\Var \defs \Var_\Set \cup \Var_\Int \cup \Var_\Ur$.
\qed
\end{itemize}
\end{definition}

Intuitively, $\e$ represents the empty set; $\{x \plus A\}$ represents the
set\footnote{$\plus$ is akin to Prolog's list constructor
`$\mid$'.}\textsuperscript{-}\footnote{In \setlog, $\e$ is written as \{\}
and $\plus$ as /, see Section \ref{impl}.} $\{x\} \cup A$; and $\Var_\Set$,
$\Var_\Int$ and $\Var_\Ur$ represent sets of variables ranging over sets,
integers and ur-elements\footnote{Ur-elements (also known as atoms or
individuals) are objects which have no elements but are distinct from the empty
set.}, respectively.

Sorts of function and predicate symbols are specified as follows: if $f$
(resp., $\pi$) is a function (resp., a predicate) symbol of arity $n$,
then its sort is an $n+1$-tuple $\langle s_1, \ldots ,s_{n+1} \rangle$ (resp.,
an $n$-tuple $\langle s_1, \ldots ,s_n \rangle$) of non-empty subsets of the
set $\{ \sSet , \sInt, \sUr \}$ of sorts. This notion is denoted by $f:\langle
s_1, \ldots ,s_{n+1}\rangle$ (resp., by $\pi:\langle s_1, \ldots ,s_n\rangle
$). Specifically, the sorts of the elements of $\mathcal{F}$ and $\Var$ are the
following.
%We will also use the symbol $\sA$ to indicate that $t$ can be of any sort
%allowed in $\LCARD$.

\begin{definition}[Sorts of function symbols and variables]\label{d:sorts}
The sorts of the symbols in $\mathcal{F}$ are as follows:
%\begin{gather*}
\begin{flalign*}
  \quad\quad & \e: \langle \{\sSet \} \rangle & \\
 & \mathsf{\w: \langle \{\sSet , \sInt, \sUr\}, \{ \sSet \} , \{ \sSet\}\rangle } & \\
 & c: \langle \{\sInt \} \rangle \text{, for any $c \in \{0,-1,1,-2,2,\dots\}$} & \\
 & \cdot + \cdot, \cdot - \cdot, \cdot * \cdot:
  \langle \{\sInt\}, \{\sInt\} , \{\sInt\}\rangle & \\
 & f: \langle \underbrace{\{\sSet,\sInt,\sUr\}, \ldots,
           \{\sSet,\sInt,\sUr\}}_n, \{{\sf
\sUr}\}\rangle\text{, if $f \in \FUr$ is of arity $n \ge 0$}. &
 \end{flalign*}
 %\end{gather*}
The sorts of variables are as follows:
%\begin{gather*}
\begin{flalign*}
  \quad\quad & v: \langle \{\sSet \} \rangle \text{, if $v \in \Var_\Set$} & \\
 & v: \langle \{\sInt \} \rangle \text{, if $v \in \Var_\Int$} & \\
 & v: \langle \{\sUr \} \rangle \text{, if $v \in \Var_\Ur$} & \tag*{\qed}
%\qedhere
 \end{flalign*}
 %\end{gather*}
\end{definition}

\begin{definition}[Sorts of predicate symbols]\label{d:sorts_pred}
The sorts of the predicate symbols in $\Pi$ are as follows
(symbols $\Cup$ and $\Size$ are prefix; all other symbols in $\Pi$ are infix):
%\begin{gather*}
\begin{flalign*}
 \quad\quad & =,\neq: \langle \{\sSet , \sInt, \sUr \}, \{ \sSet , \sInt, \sUr \} \rangle  & \\
 & \in,\notin: \langle \{\sSet, \sInt, \sUr \} , \{\sSet \} \rangle & \\
 & \Cup: \langle \{\sSet \} , \{\sSet \}, \{\sSet \} \rangle & \\
 & \disj: \langle \{\sSet \} , \{\sSet \} \rangle & \\
 & \Size: \langle \{\sSet \} , \{\sInt \} \rangle & \\
 & \leq: \langle \{\sInt \} , \{\sInt \} \rangle & \tag*{\qed}
%\qedhere
 \end{flalign*}
 %\end{gather*}
\end{definition}

Note that arguments of $=$ and $\neq$ can be of any of the three considered
sorts. We do not have distinct symbols for different sorts, but the
interpretation of $=$ and $\neq$ (see Section \ref{semantics}) depends on the
sorts of their arguments.
%Intuitively, the interpretation of $=$ when both arguments are of the same sort
%$\mathsf{X}$ is the identity in the domain associated with $\mathsf{X}$, while
%when arguments have different sorts its interpretation

The set of admissible (i.e., well-sorted) $\LCARD$ terms is defined as follows.

\begin{definition}[$\CARD$-terms]\label{LCARD-terms}
The set of \emph{$\CARD$-terms}, denoted by $\TCARD$, is the minimal subset of
the set of $\Sigma_{\CARD}$-terms generated by the following grammar complying
with the sorts as given in Definition \ref{d:sorts}:
\begin{flalign*}
 \quad\quad C     &::=   0 \hspace{2pt}|\hspace{2pt}
          {-1} \hspace{2pt}|\hspace{2pt}
            1 \hspace{2pt}|\hspace{2pt}
          {-2} \hspace{2pt}|\hspace{2pt}
            2 \hspace{2pt}|\hspace{2pt}
           \dots & \\
\TInt &::= C \hspace{2pt}|\hspace{2pt}
          \Var_\Int \hspace{2pt}|\hspace{2pt}
          C * \Var_\Int \hspace{2pt}|\hspace{2pt}
          \Var_\Int * C \hspace{2pt}|\hspace{2pt}
          \TInt + \TInt \hspace{2pt}|\hspace{2pt}
          \TInt - \TInt & \\
\TCARD & ::=
  \TInt \hspace{2pt}|\hspace{2pt}
  \TUr \hspace{2pt}|\hspace{2pt} \Var_\Ur \hspace{2pt}|\hspace{2pt}
  \mathit{Set} & \\
\mathit{Set} & ::=
   \q\e\qr
   \hspace{2pt}|\hspace{2pt}
   \Var_S
   \hspace{2pt}|\hspace{2pt}
      \q\{\qr \TCARD
           \ql\hspace{-2pt}\plus\hspace{-2pt}\qr \mathit{Set} \ql\}\q &
\end{flalign*}
where $\TInt$ (resp., $\TUr$) represents any non-variable $\FInt$-term
(resp., $\FUr$-term). \qed
\end{definition}

As can be seen, through rules $C$ and $\TInt$, the grammar allows only integer
linear terms.

If $t$ is a term $f(t_1,\dots,t_n)$, $f \in \mathcal{F}, n \ge 0$, and $\langle
s_1, \ldots ,s_{n+1} \rangle$ is the sort of $f$, then we say that $t$ is of
sort $\langle s_{n+1} \rangle$. The sort of any $\CARD$-term $t$ is always
$\langle \{\sSet\}\rangle$ or $\langle \{\sInt\} \rangle$ or $\langle \{\sUr\}
\rangle$. For the sake of simplicity, we simply say that $t$ is of sort $\sSet$
or $\sInt$ or $\sUr$, respectively. In particular, we say that a $\CARD$-term
of sort $\sSet$ is a {\em set term}, and that set terms of the form  $\{t_1
\plus t_2\}$ are \emph{extensional} set terms. The first parameter of an
extensional set term is called \emph{element part} and the second is called
\emph{set part}. Observe that one can write terms representing sets which are
nested at any level.

Hereafter, we will use the following notation for extensional set terms:
$\{t_1,t_2,\dots,t_n \plus t\}$, $n \ge 1$, is a shorthand for $\{t_1
\plus \{t_2 \,\plus\, \cdots \{ t_n \plus t\}\cdots\}\}$, while
$\{t_1,t_2,\dots,t_n\}$ is a shorthand for $\{t_1,t_2,\dots,t_n \plus \e\}$.
Moreover, we will use the following naming conventions: $A, B, C, D$
%and $\tx{A}, \tx{B}, \tx{C}, \tx{D}$, respectively, for variables and generic
stand for terms of sort $\sSet$;  $i, j, k, m$
%and $\tx{i}, \tx{j}, \tx{k}, \tx{m}, \tx{n}$, respectively, for variables and generic
stand for terms of sort $\sInt$; $a, b, c, d$ stand for terms of sort $\sUr$; and $x, y, z$
stand for terms of any of the three sorts.

\begin{example}[Set terms]
The following $\Sigma_{\CARD}$-terms are set terms:
%\begin{gather*}
\begin{flalign*}
  \quad\quad & \e &\\
 & \{x \plus A\} & \\
 & \{4+k,f(a,b)\}, \text{ i.e., } \{4+k \plus \{f(a,b) \plus \e \}\},
  \text{ where $f$ is a (uninterpreted) symbol in $\FUr$.} &
%\end{gather*}
\end{flalign*}
On the opposite, $\{x \plus 17\}$ is not a set term. \qed
\end{example}

%\begin{definition}
%[$\Sigma_{\CARD}$-predicates] A \emph{$\Sigma_{\CARD}$-predicate} is
%any atomic formula of the form $\pi(t_1,\dots,t_n)$, $n > 0$, where $\pi$
%is a predicate symbol in $\Pi$ and $t_1,\dots,t_n$ are
%$\Sigma_{\CARD}$-terms. We will say that $\pi(t_1,\dots,t_n)$ is
%\emph{based on} predicate symbol $\pi$.
%\qed
%\end{definition}

The sets of well-sorted $\LCARD$ constraints and
formulas are defined as follows.

\begin{definition}[$\card{\cdot}$-constraints]\label{primitive-constraint}
If $\pi \in \Pi$ is a predicate symbol of sort $\langle s_1, \ldots , s_n
\rangle$, and for each $i=1,\ldots , n$, $t_i$ is a $\CARD$-term of sort
$\langle s'_i \rangle$ with $s'_i \subseteq s_i$, then $\pi (t_1,\ldots ,t_n)$
is a \emph{$\CARD$-constraint}. The set of $\CARD$-constraints is denoted by
$\CCARD$. \qed
\end{definition}

$\CARD$-constraints whose arguments are of sort $\sSet$ (including $\Size$
constraints) will be called \emph{set constraints}; $\CARD$-constraints whose
arguments are of sort $\sInt$ will be called \emph{integer constraints}.

\newcommand{\FCARD}{\Phi_{\CARD}}

\begin{definition}[$\CARD$-formulas]\label{formula}
The set of $\CARD$-formulas, denoted by $\FCARD$, is given
by the following grammar:
%\begin{gather*}
\begin{flalign*}
 \quad\quad & \FCARD ::=
  \true \mid \false \mid \CCARD \mid \FCARD \land \FCARD \mid \FCARD \lor
  \FCARD &
%\end{gather*}
\end{flalign*}
where $\CCARD$ represents any element belonging to the set of
$\CARD$-constraints. \qed
\end{definition}

\begin{example}[$\CARD$-formulas]\label{ex:formulas}
The following are $\CARD$-formulas:
%\begin{gather*}
\begin{flalign*}
 \quad\quad & a \in A \land a \notin B \land \Cup(A,B,C) \land C = \{x \plus D\} & \\
 & \Cup(A,B,C) \land n+k > 5 \land \Size(C,n) \land B \neq \e & \\
 & x \in A \land B \in A \land \Size(A,x) \land \Size(B,y) \land x<y &
%\end{gather*}
\end{flalign*}
%where $a$ is a constant of sort $\sUr$.
%, $x$ is a variable of sort $\sA$, $A$, $B$, $C$ and $D$ are variables of sort
%$\sSet$ and $k$ and $n$  are variables of sort $\sInt$.
On the contrary, $\Cup(A,B,23)$ is not a $\CARD$-formula because $\Cup(A,B,23)$
is not a $\CARD$-constraint ($23$ is not of sort $\sSet$ as required by the
sort of $\Cup$). \qed
\end{example}

As we will show in Section \ref{expressiveness}, the language does not need a primitive negation connective, thanks to the presence of negative constraints.

\subsection{\label{semantics}Semantics}

Sorts and symbols in $\Sigma_{\CARD}$ are interpreted according to the
interpretation structure $\iS \defs \langle D,\iF{\cdot}\rangle$, where $D$ and
$\iF{\cdot}$ are defined as follows.

\begin{definition} [Interpretation domain] \label{def:int_dom}
The interpretation domain $D$ is partitioned as $D \defs D_\sSet \cup D_\sInt \cup D_\sUr$
where:
\begin{itemize}
\item $D_\sSet$ is the set of all hereditarily finite hybrid
sets built from elements in $D$. Hereditarily finite sets are those sets that
admit (hereditarily finite) sets as their elements, that is sets of sets.
\item $D_\sInt$ is the set of integer numbers, $\mathbb{Z}$.
\item $D_\sUr$ is a collection of other objects. \qed
\end{itemize}
\end{definition}

\begin{definition} [Interpretation function] \label{app:def:int_funct}
The interpretation function $\iF{\cdot}$ is defined as follows:
\begin{itemize}
\item Each sort $\mathsf{X} \in \{\sSet,\sInt,\sUr\}$ is mapped to
      the domain $D_\mathsf{X}$.

\item For each sort $\mathsf{X}$, each variable $x$ of sort $\mathsf{X}$ is mapped to
      an element $x^\iS$ in  $D_\mathsf{X}$.

\item The constant and function symbols in $\mathcal{F}_\Set$ are
interpreted as follows:
  \begin{itemize}
  \item $\e$ is interpreted as the empty set, namely $\e^\iS = \e$
  \item $\{ x \plus A \}$ is interpreted as the set $\{x^\iS\} \cup A^\iS$.
  \end{itemize}

\item The constant and function symbols in $\mathcal{F}_\Int$ are
interpreted as follows:
\begin{itemize}
\item Each element of \{0,-1,1,-2,2,\dots\} is interpreted as the corresponding integer number
\item $i + j$ is interpreted as $i^\iS + j^\iS$
\item $i - j$ is interpreted as $i^\iS - j^\iS$
\item $i * j$ is interpreted as $i^\iS * j^\iS$
\end{itemize}

\item The predicate symbols in $\Pi$ are interpreted as follows:
  \begin{itemize}
   \item $x = y$, where $x$ and $y$ have the same sort $\mathsf{X}$, is interpreted as the
   identity between $x^\iS$ and $y^\iS$ in $D_\mathsf{X}$;
   otherwise, $x = y$ is interpreted as being $\false$
   %(e.g., by mapping $x$ and $y$ to distinct objects in $D_\sSet$)
   \item $x \in A$ is interpreted as $x^\iS \in A^\iS$
   \item $\Cup(A,B,C)$ is interpreted as $C^\iS = A^\iS \cup B^\iS$
   \item $A \disj B$ is interpreted as $A^\iS \cap B^\iS = \emptyset$
   \item $\Size(A,k)$ is interpreted as $\card{A^\iS} = k^\iS$
   %\item $\set(x)$ is interpreted as $x^\iS \in D_\sSet$
   \item $i \leq j$ is interpreted as $i^\iS \leq j^\iS$
   \item $x \neq y$ and $x \notin A$ are
   interpreted as $\lnot x = y$ and $\lnot x \in A$, respectively.
   %where $\lnot$ is logical negation.
\qed
\end{itemize}
\end{itemize}
\end{definition}

It is worth noting that $\Size(A,k)$ is interpreted as the
\emph{cardinality}, i.e., the number of elements, of the set denoted by $A$,
and it is not to be confused with the term size, i.e., the number of function
symbols appearing in the term $A$.

The interpretation structure $\iS$ is used to evaluate each $\CARD$-formula
$\Phi$ into a truth value $\Phi^\iS = \{\true,\false\}$ in the following way:
set constraints (resp., integer constraints) are evaluated by $\iF{\cdot}$
according to the meaning of the corresponding predicates in set theory (resp.,
in number theory) as defined above; $\CARD$-formulas are evaluated by
$\iF{\cdot}$ according to the rules of propositional logic. A $\LCARD$-formula
$\Phi$ is \emph{satisfiable} iff there exists an assignment $\sigma$ of values
from ${\cal D}$ to the variables of $\Phi$, respecting the sorts of the
variables, such that $\Phi[\sigma]$ is true in $\iS$, i.e., $\iS \models
\Phi[\sigma]$. In this case, we say that $\sigma$ is a \emph{successful
valuation} (or, simply, a \emph{solution}) of $\Phi$.

In particular, observe that equality between two set terms is interpreted as
the equality in $D_\sSet$; that is, as set equality between hereditarily finite
hybrid sets. Such equality is regulated by the standard \emph{extensionality
axiom}, which has been proved to be equivalent, for hereditarily finite sets,
to the following equational axioms \cite{Dovier00}:
\begin{gather}
\{x, x \plus A\} = \{x \plus A\} \tag{$Ab$} \label{Ab} \\
\{x, y \plus A\} = \{y, x \plus A\} \tag{$C\ell$} \label{Cl}
\end{gather}
Axiom \eqref{Ab} states that duplicates in a set term do not matter
(\emph{Absorption property}). Axiom \eqref{Cl} states that the order of
elements in a set term is irrelevant (\emph{Commutativity on the left}). These
two properties capture the intuitive idea that, for instance, the set terms
$\{1,2\}$, $\{2,1\}$, and $\{1,2,1\}$ all denote the same set.
%In other words, duplicates do not occur in a set, but they may occur in the set
%term that denotes it.

\subsection{\label{expressiveness}{Derived Constraints}}

$\LCARD$ can be extended to support other set and integer operators definable
by means of suitable $\LCARD$ formulas.

\citeN{Dovier00} proved that the collection of predicate symbols
in $\Pi_{=} \cup \PiSet$ is
sufficient to define constraints implementing the set operators $\cap$,
$\subseteq$ and $\setminus$. For example, $A \subseteq B$ can be defined by the
$\LCARD$ formula $\Cup(A,B,B)$. Likewise, $\{=,\neq\} \cup \PiInt$
is sufficient to define $<$, $>$ and $\geq$. With a slight abuse of
terminology, we say that the set  and integer predicates that are specified by
$\CARD$-formulas are \emph{derived constraints}.

Whenever a formula contains a derived constraint, the constraint is replaced by
its definition turning the given formula into an $\LCARD$ formula. Precisely, if
formula $\Phi$ is the definition of constraint $c$, then $c$ is replaced by
$\Phi$ and the solver checks satisfiability of $\Phi$ to determine
satisfiability of $c$. Thus, we can completely ignore the presence of derived
constraints in the subsequent discussion about constraint solving and formal
properties of our solver.

%%%%%%%%%%%%%%%%

The negated versions of set and integer operators can be introduced as derived
constraints, as well. The derived constraint for $\lnot\cup$ and
$\lnot\disj$ (called $\Ncup$ and $\Ndisj$, respectively) are shown in
\cite{Dovier00}.
%\cite{Dovier00,DBLP:conf/cav/CristiaR16}.
For example, $\lnot(A \cup B = C)$ is introduced as:
\begin{equation}\label{e:nun}
\Ncup(A,B,C) \defs
     (n \in C \land n \notin A \land n \notin B)
     \lor (n \in A \land n \notin C)
     \lor (n \in B \land n \notin C)
\end{equation}
With a
little abuse of terminology, we will refer to these predicates as
\emph{negative constraints}.

Thanks to the availability of negative constraints, (general) logical negation
is not strictly necessary in $\LCARD$.

Now that we have derived and negative constraints it is easy to see that
$\LCARD$ expresses the Boolean algebra of sets with cardinality.

%Although derived constraints can be dealt with by simply replacing them with
%the corresponding formulas, in practice this approach may cause the
%implementation to produce unacceptably lengthy computations. In Section
%\ref{empirical} we show how the implementation becomes crucially more effective
%and efficient by adding rewrite rules for some key derived constraints.

\begin{remark}[\CLPSET]
\setlog
provides an implementation of the CLP instance \CLPSET \cite{Dovier00}. In
turn, \CLPSET is based on a constraint language including
$\mathcal{F}_\Set$ and $\PiSet$, with the same sorts; formulas in \CLPSET are
built as in $\LCARD$. Hence, $\LCARD$ effectively extends \CLPSET by
introducing $\Size$ constraints and integer arithmetic. An $\LCARD$ formula not
including $\Size$ constraints nor integer constraints is a \CLPSET formula.
Hereafter, we will simply use the name \CLPSET to refer to the
constraint language offered by \setlog.
\qed
\end{remark}

%%%%%%%%%%%%%%%%%%%%%%%%%%%%%%%%%%%%%%%%%%%%%%%%%%%%%%%%
\section{\label{satcard}$\SATCARD$: a constraint solving procedure for $\LCARD$}
%%%%%%%%%%%%%%%%%%%%%%%%%%%%%%%%%%%%%%%%%%%%%%%%%%%%%%%%

A complete solver for \CLPSET is proposed in \cite{Dovier00}. In this section,
we show how that solver can be combined with Zarba's decision procedure
\cite{DBLP:conf/frocos/Zarba02}---hereafter simply called $\Zarba$---to support
cardinality constraints. The resulting constraint solving procedure, called
$\SATCARD$ (read `sat-card'), is a decision procedure for $\LCARD$ formulas.
Furthermore, it produces a finite representation of all possible solutions of
any satisifiable $\LCARD$ formula (see Section \ref{decproc}).

\subsection{The solver}

The overall organization of $\SATCARD$ is shown in Algorithm
\ref{glob}. Basically, $\SATCARD$ uses four routines:
\textsf{gen\_size\_leq}, $\mathsf{STEP_S}$, \textsf{remove\_neq} and
$\STEPCARD$. $\STEPCARD$, which is crucial for the integration of cardinality
constraints into \CLPSET, will be presented separately in Section
\ref{sizesolver}.
%$\mathsf{STEP_{\CARD}}$.

\begin{algorithm}%[htbp]
\begin{algorithmic}[0]
 \State $\Phi \gets \textsf{gen\_size\_leq}(\Phi)$;
 \Repeat
   \State $\Phi' \gets \Phi$;
   \Repeat
     \State $\Phi'' \gets \Phi$;
     \State $\Phi \gets \mathsf{STEP_S}(\Phi)$
     \hfill{\footnotesize[$\mathsf{STEP_S}$ returns $\false$ when $\Phi$ is unsat]}
   \Until{$\Phi = \Phi''$}
   \State $\Phi \gets \textsf{remove\_neq}(\Phi)$
 \Until{$\Phi = \Phi'$}
 \hfill{\footnotesize[end of main loop]}
 \State \textbf{let} $\Phi$ \textbf{be} $\Phi_1 \land \Phi_2$
 \hfill{\footnotesize [$\Phi_1$ contains $\Size$ relevant constraints, see Section \ref{integrating}]}
 \State $\Phi_1 \gets \STEPCARD(\Phi_1)$
 \hfill{\footnotesize[$\STEPCARD$ returns $\false$ when $\Phi_1$ is unsat]}
 \State\Return{$\Phi_1 \land \Phi_2$}
 \hfill{\footnotesize[returns $\false$ (unsat); or a disjunction of formulas representing all solutions]}
\end{algorithmic}
\caption{The solver $\SATCARD$. $\Phi$ is the input formula.} \label{glob}
\end{algorithm}

\textsf{gen\_size\_leq} simply adds integer constraints to the input formula
$\Phi$ to force the second argument of each $\Size$ constraint in $\Phi$ to be a
non-negative integer. $\mathsf{STEP_S}$ includes the constraint solving
procedure for the \CLPSET fragment as well as the constraint solving procedures
for cardinality constraints (see Section \ref{steps}). $\mathsf{STEP_S}$
applies specialized rewriting procedures to the current formula $\Phi$ and
returns either $\false$ or the modified formula. Each rewriting procedure
applies a few non-deterministic rewrite rules which reduce the syntactic
complexity of $\CARD$-constraints of one kind. \textsf{remove\_neq} deals with
the elimination of $\neq$ constraints involving set variables. Its purpose and
definition is made evident in \ref{inequalities}.
%$\STEPCARD$ is our adaptation of the $\Zarba$ procedure to \CLPSET and is the
%only routine that does not implement rewrite rules (see Section
%\ref{sizesolver}).

The execution of $\mathsf{STEP_S}$ and \textsf{remove\_neq} is iterated until a
fixpoint is reached, i.e., the formula is irreducible. These routines return
$\false$ whenever (at least) one of the involved procedures rewrites $\Phi$ to
$\false$. In this case, a fixpoint is immediately detected.

As we will show in Section \ref{decproc}, when all the non-deterministic
computations of $\SATCARD(\Phi)$ return $\false$, then we can conclude that
$\Phi$ is unsatisfiable; otherwise, we can conclude that $\Phi$ is satisfiable
and each solution of the formulas returned by $\SATCARD$ is a solution of
$\Phi$, and vice versa.

The rewrite rules used by $\SATCARD$ are defined as follows.

\begin{definition}[Rewrite rules]\label{d:rw_rules}
If $\pi$ is a symbol in $\Pi$ and $\phi$ is a $\CARD$-constraint based on
$\pi$, then a \emph{rewrite rule for $\pi$-constraints} is a rule of the form
$\phi \lfun \Phi_1 \lor \dots \lor \Phi_n$, where $\Phi_i$, $i \geq 1$, are
$\CARD$-formulas. Each $\Sigma_{\CARD}$-predicate matching $\phi$ is
non-deterministically rewritten to one of the $\Phi_i$ s. Variables appearing in
the right-hand side but not in the left-hand side are assumed to be fresh
variables, implicitly existentially quantified over each $\Phi_i$. \qed
\end{definition}

A \emph{rewriting procedure} for $\pi$-constraints consists of the collection
of all the rewrite rules for $\pi$-constraints. For each rewriting procedure,
$\mathsf{STEP_S}$ checks rules in the order they are listed
in the figures below. The first rule whose left-hand side matches the input
$\pi$-constraint is used to rewrite it.
%If no rule applies, then the input constraint is left unchanged (i.e., it is
%irreducible).
Constraints that no rule rewrites are called \emph{irreducible}. Irreducible
constraints are part of the final answer of $\mathsf{STEP_S}$ (see Definition
\ref{def:solved}).

The following conventions are used throughout the rules. $\dot x$, for any name
$x$, is a shorthand for $x \in \Var$, i.e., $\dot x$ represents a variable. In
particular, variable names $\dot n$, $\dot n_i$, $\dot N$ and $\dot{N_i}$
denote fresh variables of sort $\sInt$ and $\sSet$, respectively. Moreover,
conjunctions occurring at the right-hand side of any given rule have higher
precedence than disjunctions.
%$X \not\equiv \emptyset$ means that term $X$ is distinct from the empty set;

\subsection{\label{steps}Set solving ($\mathsf{STEP_S}$)}

$\mathsf{STEP_S}$ can be divided into two collections of rewriting
procedures: those given as part of the \CLPSET system and those concerning
$\Size$ constraints.

The rewriting procedures of \CLPSET cover constraints based on $=$ when
arguments are either of sort $\sSet$ or $\sUr$, $\in$, $\Cup$, and $\disj$.
%, and all the derived constrains so definable.
Figure \ref{f:clpset} lists some representative rewrite rules of \CLPSET
which, informally, work as follows:
\begin{itemize}
\item Rule \eqref{eq:ext} is the main rule of set unification. It states when two
non-empty, non-variable sets are equal by non-deterministically and recursively
computing four cases. These cases implement the \eqref{Ab} and \eqref{Cl}
axioms shown in Section \ref{semantics}. As an example, by applying rule
\eqref{eq:ext} to $\{1\} = \{1,1\}$ we get: ($1 = 1 \land \e = \{1\}) \lor (1 =
1 \land \{1\} = \{1\}) \lor (1 = 1 \land  \e = \{1,1\}) \lor (\e = \{1 \plus
\dot N\} \land  \{1 \plus \dot N\} = \{1\})$, which turns out to be true
(due to the second disjunct).
%Observe that the set term $\{x \plus A\}$, which denotes the set $\{x\} \cup
%A$, does not necessarily require $x \not\in A$ to hold; such restrictions on
%$A$, if needed, have to be explicitly stated using not-membership constraints.

\item Rule \eqref{in:var} rewrites a set membership constraint into an equality
constraint. This means that a formula such as $x \in \dot{A} \land y \in
\dot{A}$ will eventually be transformed into $\{x \plus \dot{N_1}\} = \{y \plus
\dot{N_2}\}$ which will be processed by rule \eqref{eq:ext}.

\item Rule \eqref{notin:ext}
deals with not membership constraints. When
the r.h.s. of a $\notin$ constraint is an extensional set term, rule
\eqref{notin:ext} operates recursively to check that $x$ is not an element of
the set. Conversely, when the r.h.s. is a variable, $\notin$ constraint
are left unchanged (see Definition \ref{def:solved}).

\item Rule \eqref{un:ext1} is one of the main rules for $\Cup$ constraints.
Observe that this rule is based on set unification. It computes two cases: $x$
does not belong to $A$ and $x$ belongs to $A$ (in which case $A$ is of the form
$\{x \plus \dot{N_2}\}$ for some set $\dot{N_2}$).  In the latter case $x
\notin \dot{N_2}$ prevents Algorithm \ref{glob} from generating infinite terms
denoting the same set.

\item Finally, rule \eqref{disj:id} deals with a particular form of a disjointness
constraint.
\end{itemize}
The rest of the rewrite rules of \CLPSET can be found in \cite{Dovier00} and
online \cite{calculusBR}.

\begin{figure}
\hrule\vspace{3mm}
\begin{flalign}
\quad\quad & \{x  \plus{} A\} = \{y \plus B\} \lfun & \notag  \\
  & \qquad x = y \land A = B & \notag \\
  & \qquad \lor x = y \land \{x \plus A\} = B & \label{eq:ext} \\
  & \qquad \lor x = y \land A = \{y \plus B\} & \notag \\
  & \qquad \lor A = \{y \plus \dot N\} \land \{x \plus \dot N\} = B & \notag \\[2mm]
& x \in \dot{A} \lfun \dot{A} = \{x \plus \dot N\} & \label{in:var}  \\[2mm]
& x \notin \{y \plus A\} \lfun x \neq y \land x \notin A  & \label{notin:ext}  \\[2mm]
& \Cup(\{x \plus C\}, A, \dot{B}) \rightarrow & \notag \\
  & \qquad  \{x \plus C\} = \{x \plus \dot{N_1}\}
      \land x \notin \dot{N_1} \land \dot{B} = \{x \plus \dot N\} & \label{un:ext1} \\
  & \qquad \land (x \notin A \land \Cup(\dot{N_1}, A, \dot N) & \notag \\
  & \qquad {}\qquad\lor A = \{x \plus \dot{N_2}\}
                 \land x \notin \dot{N_2} \land \Cup(\dot{N_1}, \dot{N_2}, \dot
                 N)) & \notag \\[2mm]
& \dot{X} \disj \dot{X} \rightarrow \dot{X} = \e & \label{disj:id}
\end{flalign}
 \hrule
 \caption{\label{f:clpset}Some rewrite rules of \CLPSET}
\end{figure}

The rewrite rules concerning $\Size$ constraints implemented in
$\mathsf{STEP_S}$ are listed in Figure \ref{f:card1}. Rules
\eqref{size:empty}-\eqref{size:expr} are straightforward. Rule \eqref{size:ext}
computes the size of any extensional set by counting the elements that belong
to it while taking care of avoiding duplicates. This means that, for instance,
the first non-deterministic choice for a formula such as
$\Size(\{1,2,3,1,4\},m)$ will be:
\[
1 \notin \{2,3,1,4\}
  \land m = 1 + \dot{n} \land \Size(\{2,3,1,4\},\dot{n}) \land 0 \leq \dot{n}
\]
which will eventually lead to a failure due to the presence of  $1 \notin
\{2,3,1,4\}$ and rule \eqref{notin:ext}. This implies that $1$ will be counted
in its second occurrence. Besides, the second choice becomes
$\Size(\{2,3,1,4\},m)$ which is correct given that $\card{\{1,2,3,1,4\}} =
\card{\{2,3,1,4\}}$.

\begin{figure}
\hrule\vspace{3mm}
\begin{flalign}
 \quad\quad
 & \Size(\emptyset,m) \lfun m = 0 \label{size:empty} & \\[2mm]
 & \Size(A,0) \lfun A = \emptyset \label{size:zero} & \\[2mm]
 & \text{If $e$ is a compound arithmetic expression:} \notag & \\
 & \qquad \Size(A,e) \lfun \Size(A,\dot{n}) \land  \dot{n} = e \land 0 \leq \dot{n} \label{size:expr} & \\[2mm]
& \Size(\{x \plus A\},m) \lfun & \notag \\
   & \qquad x \notin A \land m = 1 + \dot n \land \Size(A,\dot n) \land 0 \leq \dot{n} & \label{size:ext} \\
   & \qquad \lor A = \{x \plus \dot N\} \land x \notin \dot N \land \Size(\dot N,m)
      & \notag
\end{flalign}
\hrule \caption{\label{f:card1}Rewrite rules for the $\Size$ constraint}
\end{figure}

Integer constraints, i.e., atomic constraints whose arguments are of sort
$\sInt$ (including those based on $=$ and $\neq$),
%, and all the derived constrains so definable,
are simply dealt with as irreducible by $\mathsf{STEP_S}$; hence, they are
passed ahead to be checked by the routine $\STEPCARD$ after the main loop of
$\SATCARD$ terminates successfully.

\subsection{\label{irreducible}Irreducible constraints}

When no rewrite rule is applicable to the current $\CARD$-formula $\Phi$ and $\Phi$
is not $\false$, the main loop of $\SATCARD$ terminates returning $\Phi$ as its
result. This formula can be seen, without loss of generality, as $\Phi_\Set
\land \Phi_\Int$, where $\Phi_\Int$ contains all (and only) integer constraints
and $\Phi_\Set$ contains all other constraints occurring in $\Phi$.

The following definition precisely characterizes the form of atomic constraints
in $\Phi_\Set$.

\begin{definition}[Irreducible formula]\label{def:solved}
Let $\Phi$ be a $\CARD$-formula, $A$ and $A_i$ $\CARD$-terms of sort $\sSet$, $t$
and $\dot{X}$ $\CARD$-terms of sort $\langle \{\sSet,\sUr\} \rangle$, $x$ a
$\CARD$-term of any sort, and $c$ a variable or a constant integer number. A
$\CARD$-constraint $\phi$ occurring in $\Phi$ is \emph{irreducible} if it has
one of the following forms:
\begin{enumerate}[label=(\roman*), leftmargin=*, widest=viii]
\item \label{i:icfirst} $\dot{X} = t$, and neither $t$ nor $\Phi \setminus \{\phi\}$
contains $\dot{X}$;
\item $\dot{X} \neq t$, and $\dot{X}$ does not occur either in $t$ or
as an argument of any constraint $\pi(\dots)$, $\pi \in \{\Cup,\Size\}$, in
$\Phi$;
\item $x \notin \dot{A}$, and $\dot{A}$ does not occur in $x$;
\item $\Cup(\dot{A}_1,\dot{A}_2,\dot{A}_3)$, where $\dot{A}_1$ and $\dot{A}_2$
are distinct variables;
\item $\dot{A}_1 \disj \dot{A}_2$, where $\dot{A}_1$ and $\dot{A}_2$ are distinct variables;
\item $\Size(\dot{A}, c)$, $c \neq 0$.
\end{enumerate}
A $\CARD$-formula $\Phi$ is irreducible if it is $\true$ or if all of its
$\CARD$-constraints are irreducible. \qed
\end{definition}

$\Phi_\Set$, as returned by $\SATCARD$ once it finishes its main loop, is an
irreducible formula. This fact can be checked by inspecting the rewrite rules
presented in \cite{Dovier00} and those for the $\Size$ constraints given in
Figure \ref{f:card1}. This inspection is straightforward as there are no
rewrite rules dealing with irreducible constraints and all non-irreducible form
constraints are dealt with by some rule.

Putting $\Size$ constraints aside, $\Phi_\Set$ is basically the formula
returned by the \CLPSET solver. \citeN[Theorem 9.4]{Dovier00} show
that such formula is always satisfiable, unless the result is $false$.

It is important to observe that the atomic constraints occurring in $\Phi_\Set$
are indeed quite simple. In particular, all non-variable set terms occurring in the input formula have been removed,
except those occurring as right-hand sides of $=$ and $\neq$ constraints. Thus,
all (possibly complex) equalities and inequalities between set terms have been
solved. Furthermore, all arguments of $\Cup$ and $\disj$ constraints are
necessarily simple variables.

%%%%%%%%%%%%%%%%%%%%%%%%%%%%%%%%%%%%%%%%%%%%%%%%%%%%%%%%
\section{\label{sizesolver}Cardinality solving ($\STEPCARD$)}
%%%%%%%%%%%%%%%%%%%%%%%%%%%%%%%%%%%%%%%%%%%%%%%%%%%%%%%%

Due to the presence of $\Size$ and integer constraints, a non-$false$ formula
returned by $\mathsf{STEP_S}$ and $\mathsf{remove\_neq}$ is not always
satisfiable.

\begin{example}
Assuming all the arguments to be variables, the following formula cannot be
processed any further by $\mathsf{STEP_S}$ but is unsatisfiable:
\begin{equation*}
\Cup(A,B,C) \land \Size(A,m_a) \land \Size(B,m_b) \land \Size(C,m_c)
\land m_a + m_b < m_c
\end{equation*}
as it states that $\card{A} + \card{B} < \card{A \cup B}$.
\qed
\end{example}

Therefore, Algorithm \ref{glob} includes a new step, called $\STEPCARD$, whose
purpose is to check satisfiability of the formula returned at the end of the
main loop of $\SATCARD$.
%, where $\Phi_\Set$ is a $\CARD$-formula in irreducible form and $\Phi_\Int$
%contains all the integer constraints occurring in the input formula $\Phi$ or
%generated during the processing of it.

Basically, $\STEPCARD$ encodes an adaptation of the $\Zarba$ algorithm to our
CLP system. In order to explain how we adapted $\Zarba$ we first introduce it
briefly; some technical details are omitted to simplify the presentation.

\subsection{\label{zarba}An algorithm for deciding set formulas with cardinality}

The language considered by Zarba--hereafter simply called $\LZa$---includes the
following function symbols: $\e$, $\cup$, $\cap$, $\setminus$, $+$, $-$ and
$\card{\cdot}$; the usual predicate symbols: $=$, $\in$, $<$, $>$; and
variables and integer constants as usual. All symbols have standard sorts and
semantics; in particular, sets are finite. The language also includes the
singleton set symbol $\{\cdot\}$ to form extensional sets. Note that although
$\LZa$ does not include an integer product symbol, it still allows the
representation of expressions of the form $c*x$, with either $c$ or $x$ a
constant. Formulas in $\LZa$ are built in the usual way.

%Zarba's algorithm
$\Zarba$ is divided into four phases and takes as input a conjunction of
$\LZa$ literals. However, we will present the last two phases as a single one.
\begin{enumerate}
\item
\textsc{First phase}. The input formula, $\Psi$, is
transformed and divided into two subformulas, $\Psi'$ and $\Psi''$. $\Psi''$
contains only literals of the form $v = \card{x}$ where $v$ and $x$ are integer
and set variables, respectively. $\Psi'$ contains the integer constraints
present in $\Psi$ plus a transformation of the set constraints in $\Psi$. This
transformation guarantees that all set constraints are of the following forms:
$x = y$, $x \neq y$, $x = \{u\}$, $x = y \cup z$, $x = y \cap z$ and $x = y
\setminus z$, where $x$, $y$ and $z$ are set variables and $u$ is a
ur-variable.

\begin{example}
A constraint such as $y \in x$ is transformed into $x = \{y\} \cup x$ and  then
into $w = \{y\} \land x = w \cup x$, where $w$ is a new variable.

A constraint such as $\{u\} \cup x = h \cap w$ is transformed into  $v = \{u\}
\land v \cup x = h \cap w$ and then into $v = \{u\} \land t = v \cup x \land t
= h \cap w$, where $v$ and $t$ are new variables.

A constraint such as $\card{x} + m < k$ is transformed into  $v = \card{x}
\land v + m < k$, where $v$ is a new variable. In this way, $v = \card{x}$
becomes part of $\Psi''$. \qed
\end{example}

\item
\textsc{Second phase}. $\Psi'$ is divided into three subformulas: $\Psi_\Ur$,
containing literals of the form $x = \{u\}$, where $u$ is a ur-element;
$\Psi_\Int$, containing the integer literals; and $\Psi_\Set$, containing the
set literals. So now the input formula has been transformed and divided into
four subformulas: $\Psi_\Ur$, $\Psi_\Int$, $\Psi_\Set$ and $\Psi''$. In the
next phase, $\Psi
\defs \Psi_\Ur \land \Psi_\Int \land \Psi_\Set \land \Psi''$.
\item
\textsc{Third phase}. This phase consists in executing the following  three
steps for each \emph{arrangement} of $\Psi$. Whenever there are no more
arrangements the input formula is unsatisfiable.

An arrangement of $\Psi$ is a tuple $\langle R, \Pi, at\rangle$ where: $R
\subseteq \Var_\Ur^\Psi \times \Var_\Ur^\Psi$ is an equivalence relation where
$\Var_\Ur^\Psi$ is the collection of ur-variables in $\Psi_\Ur$; $\Pi$ is a
finite collection of non-$\false$ Boolean functions $\pi: \Var_\Set^\Psi \fun
\{0,1\}$ where $\Var_\Set^\Psi$ is the collection of set variables in
$\Psi_\Set \land \Psi''$; and $at: \Var_\Ur^\Psi \fun \Pi$. $\pi$ is a
non-$\false$ Boolean function if $1 \in \ran \pi$.

From now on $\rho = \langle R, \Pi, at\rangle$ denotes the current arrangement.
   \begin{enumerate}[ref=\alph*]
   \item\label{i:checkarr}
   In this step the algorithm checks whether or not $\rho$ verifies seven
   conditions. If $\rho$ does not verify these conditions the next arrangement
   is chosen; if it does the next step is executed. We show only the
   conditions that are used in our implementation.
      \begin{enumerate}
      \renewcommand{\labelenumii}{\roman{enumii}}
      \item\label{i:un}
      If $x = y \cup z$ is in $\Psi_\Set$ then $\pi(x) = 1$ if and only if
      $\pi(y) = 1$ or $\pi(z) = 1$, for each $\pi \in \Pi$.
      \item\label{i:disj}
      If $\e = y \cap z$ is in $\Psi_\Set$ then $\pi(y) = 0$ or $\pi(z) = 0$.
      \end{enumerate}
   The remaining conditions are not used because $\Zarba$ is called after
   $\mathsf{STEP_S}$; see Section \ref{integrating} for more details.
   \item\label{i:checkint}
   In this step the algorithm checks whether or not $\Psi_\Int \land
   \mathit{res}_\Int(\rho)$ is satisfiable, where:
    \begin{equation}\label{eq:res}
    \begin{split}
    \mathit{res}_\Int&(\rho) \defs \\
      & \bigwedge_{\pi \in \Pi} 0 < v_\pi
      \bigwedge_{\pi \in \ran at} v_\pi = 1
      \bigwedge_{v = \card{x} \in \Psi''}
        v = \sum_{\pi \in \Pi} \pi(x) * v_\pi
    \end{split}
    \end{equation}
   If $\Psi_\Int \land \mathit{res}_\Int(\rho)$ is unsatisfiable the
   next arrangement is chosen and step \eqref{i:checkarr} is executed.
   \item
   In this last step the algorithm checks whether or not there are enough
   ur-elements as to satisfy $\Psi_\Ur$ considering the equivalence relation $R$
   of $\rho$ and the minimum of $\sum_{\pi \in \Pi} v_\pi$ subject to $\Psi_\Int
   \land \mathit{res}_\Int(\rho)$. If this is satisfiable, the input formula is
   satisfiable; if not, the next arrangement is chosen and step \eqref{i:checkarr}
   is executed.
   % ; if it is satisfiable the next step is executed., compute:
   %     \begin{equation}\label{eq:minzarba}
   %     min(\sum_{\pi \in \Pi} v_\pi,
   %         \Psi_\Int \land \mathit{res}_\Int(\rho))
   %     \end{equation}
   %     and execute the next step
   \end{enumerate}

   Informally, in this phase the algorithm assigns a positive cardinality ($v_\pi$)
   to each non-empty Venn region involved in the formula and tries, one after the
   other, all possible combinations of these assignments---each combination is
   encoded in each arrangement. With each combination it builds formula
   \eqref{eq:res} and checks whether the cardinality constrains are
   satisfiable or not.
\end{enumerate}

\subsection{Integrating $\Zarba$ into $\SATCARD$}\label{integrating}  %Zarba's algorithm

The repeated execution of $\mathsf{STEP_S}$ and \textsf{remove\_neq} in
$\SATCARD$ implements up to the second phase of $\Zarba$. The third phase
of $\Zarba$ is implemented by $\STEPCARD$. Formulas returned at the end of the
main loop of $\SATCARD$ (i.e., $\CARD$-formulas in irreducible form) can be
easily transformed into the formulas obtained after executing the second phase
of $\Zarba$. A detailed definition of a mapping of such formulas into the
corresponding $\LZa$ formulas is given in \ref{mapping}. Hereafter, we provide
an intuitive description of which formulas are passed to
$\STEPCARD$.

Let $\Phi \defs \Phi_1 \land \Phi_2$ be the formula in irreducible form right
after the main loop of Algorithm \ref{glob}, where $\Phi_1$ contains all
integer constraints and all of the $\Cup$, $\disj$ and $\Size$ constraints, and
$\Phi_2$ is the rest of $\Phi$ (i.e., $\notin$ constraints, and $=$ and $\neq$
constraints not involving integer terms). Hence, $\STEPCARD$ is called on
$\Phi_1$ as follows:
\begin{itemize}
\item
All integer constraints are passed basically unaltered to
$\STEPCARD$.
\item
$\CARD$-constraints of the form $\Cup(A,B,C)$, $A \disj B$, $\Size(A,m)$, where
$A, B, C$ are variables and $m$ is either a variable or an integer constant,
are mapped to literals of the form $C = A \cup B$, $A \cap B = \e$, $\card{A} =
m$, respectively, in $\LZa$.
\end{itemize}
On the other hand, constraints in $\Phi_2$ are
not passed to $\STEPCARD$:
\begin{itemize}
\item equality constraints are ignored because these
variables do not appear in the rest of $\Phi$.
\item
$\neq$ constraints not involving integer terms and $\notin$ constraints are
ignored because they do not affect the cardinality of the set variables
involved in the formula. Indeed, in $\LCARD$ we assume that the universe of
objects which can be used as set elements is infinite---as it includes integer
numbers and (nested) sets. Hence, constraints of the form $X \neq t$ and $t
\notin X$ (with $X$ variable and $t$ any term) do not forbid any value of the
cardinality of $X$. For instance, if $\Phi$ contains $1 \notin S \land 2 \notin
S \land ... \land 20 \notin S \land \Size(S,m)$, with $S$ variable, then we can
find anyway $m$ constants different from $1,\dots,20$ to fill the set $S$.
\end{itemize}

Note that non-variable set terms occur only in those
constraints of $\Phi_2$ that are not passed to $\STEPCARD$. Thus, the
translation function $\map$ shown in \ref{mapping}, which only deals with
variables, is indeed capable of translating any $\LCARD$ formula that is passed
to it.

$\STEPCARD$ implements the first two steps of the third phase by casting step
\eqref{i:checkarr} in terms of a Boolean satisfiability problem and step
\eqref{i:checkint} in terms of an integer linear programming (ILP) problem
\cite{10.5555/1611284}. All the solutions returned by solving the Boolean
formula are collected in a set $S$ and then all possible
arrangements are the elements of $2^{S}$.
%$\mathcal{P} S$.
A description of a concrete implementation of these two steps is given in the
next subsection.

The last step of the third phase is not implemented again because of
the assumption about the infinity of the universe of objects which can be used
as set elements in $\LCARD$.

It is worth noting that, in the integrated system, unsatisfiability
caused by set constraints, excluding $\Size$, can be caught directly by
$\mathsf{STEP_S}$ and \textsf{remove\_neq}, without executing
$\STEPCARD$.

\begin{example}
Consider the following formula:
\begin{equation*}
\Cup(A,B,C) \land A \disj C \land A \neq \e \land \Size(C,k) \land k < 2.
\end{equation*}
where $A$, $B$, $C$, and $k$ are variables. The subformula $\Cup(A,B,C) \land A
\disj C \land A \neq \e$ is not in irreducible form and it is further processed
first by \textsf{remove\_neq} and then by $\mathsf{STEP_S}$, that finally
rewrites it to \false. That is, the input formula is found to be unsatisfiable
disregarding the cardinality and integer constraints occurring in it. \qed
\end{example}

On the other hand, the presence of $\STEPCARD$ in $\SATCARD$ allows us to solve
linear integer constraints even if the given formula does not contain any
$\Size$ constraint. For example, a formula such as $x > y \land x < y +
1$ is found to be false by exploiting the integer constraint solver included
in $\STEPCARD$.

\subsection{A concrete implementation of $\STEPCARD$}\label{concrete}

In this section we briefly outline a concrete Prolog implementation of
$\STEPCARD$. This implementation is obtained by integrating into the
$\STEPCARD$ procedure described above a Prolog Boolean SAT solver, namely the
very concise solver developed by
\citeN{DBLP:journals/tcs/HoweK12}, and the implementation of the CLP(Q) system
of SWI-Prolog \cite{holzbaur1995ofai}.

CLP(Q) implements a solver for linear equations, a Simplex algorithm to decide
linear inequalities and a branch and bound method to provide a decision
algorithm for ILP. This library provides
$\mathtt{bb\_inf}(Vars,Expr,Min,Vert)$, which finds the vertex ($Vert$) of the
minimum ($Min$) of the expression $Expr$ subjected to the integer constraints
present in the constraint store and assuming all the variables in $Vars$ take
integers values. In its way to find the minimum value, $\mathtt{bb\_inf}$ first
determines whether or not the constraints are satisfiable (in $\num$).
$\mathtt{bb\_inf}$ is complete provided all integer constraints are linear.
With respect to the completeness of $\mathtt{bb\_inf}$, observe that: \emph{a)}
$\LCARD$ restricts integer constraints to be linear (Definition
\ref{LCARD-terms}); and \emph{b)} the integer constraints generated by any rule
for $\Size$ are linear.

Consider a formula $\Phi$ received by $\STEPCARD$. Now consider the subformula
of $\Phi$ that is a conjunction of constraints of the following forms:
$\Cup(A,B,C)$ and $A \disj B$, with $A, B$ and $C$ variables. As $\Zarba$
must find all the non-$\false$ Boolean functions $\pi:\Var_\Set^\Phi \fun
\{0,1\}$ verifying some Boolean conditions (see Section \ref{sizesolver} for
some examples and \cite[conditions (C1)-(C7) in
3.4]{DBLP:conf/frocos/Zarba02}), we encode the conjunction of these constraints
as a Boolean formula as follows:
\begin{itemize}
  \item $\Cup(A,B,C) \lfun (\lnot C \lor B \lor A)
  \land (\lnot A \lor C) \land (C \lor \lnot A)$,
  due to condition \ref{i:un}.
  %\ref{i:un}.
  \item $A \disj B \lfun \lnot A \lor \lnot B$,
  due to condition \ref{i:disj}.
  %\ref{i:disj}.
  \end{itemize}

Next, we call Howe and King's SAT solver on the resulting Boolean formula and
collect in a set $S$ all the Boolean solutions where at least one variable is
bound to $\true$. Hence, $S$ contains all possible non-$\false$ Boolean
functions $\pi:\Var_\Set^\Phi \fun \{0,1\}$ satisfying $\Zarba$'s  conditions \ref{i:un}
and \ref{i:disj}.

If $\{\pi_1,\dots,\pi_n\}$ verifies the above condition, then we use it to
execute the second step of the third phase. Then we build formula
\eqref{eq:res} as a conjunction of CLP(Q) constraints, which is easy to
implement. All the integer constraints present in $\Phi$ and all those in
\eqref{eq:res} are passed in to the CLP(Q) constraint store. Finally, we call
CLP(Q)'s \verb+bb_inf/4+ predicate\footnote{\texttt{bb\_inf/4}:
\url{https://www.swi-prolog.org/pldoc/doc_for?object=bb_inf/4}} as follows:
\begin{equation}\label{eq:bbinf}
\mathtt{bb\_inf}(\Var_\Int,\sum_{i=1}^k m_i,\_,Vertex)
\end{equation}
where $m_1,\dots,m_k$ are the second arguments of the $\Size$ constraints in
$\Phi$. That is, we ask CLP(Q) to check the satisfiability of its constraint
store assuming that all the variables there are integers, and if so, to compute
the vertex ($Vertex$) of the minimum of the sum of the cardinalities of the
sets in $\Phi$. If this call does not fail we know $\Phi$ is satisfiable and
$\STEPCARD$ terminates; if not, we pick the next subset of $S$. If $\STEPCARD$
fails for all subsets of $S$ it returns $\false$.

%%%%%%%%%%%%%%%%%%%%%%%%%%%%%%%%%%%%%%%%%%%%%%%%%%%%%%%%
\section{\label{decproc}$\SATCARD$ is a decision procedure for $\LCARD$}
%%%%%%%%%%%%%%%%%%%%%%%%%%%%%%%%%%%%%%%%%%%%%%%%%%%%%%%%

In this section we analyze the soundness, completeness and termination
properties of $\SATCARD$.

The following theorem ensures that, after termination, the
rewriting process implemented by $\SATCARD$ preserves the set of solutions of
the input formula.

\begin{theorem}[Equisatisfiability]\label{equisatisfiable}
Let $\Phi$ be a $\CARD$-formula and $\Phi^1, \Phi^2,\dots,\Phi^n$ be the
collection of $\CARD$-formulas returned by $\SATCARD(\Phi)$. Then $\Phi^1 \lor
\Phi^2 \lor \dots \lor \Phi^n$ is equisatisfiable to $\Phi$, that is, every
possible solution\footnote{More precisely, each solution of $\Phi$ expanded to
the variables occurring in $\Phi^i$ but not in $\Phi$, so as to account for the
possible fresh variables introduced into $\Phi^i$.} of $\Phi$ is a solution of
one of the $\Phi^i$s and, vice versa, every solution of one of these formulas is
a solution for $\Phi$.
\end{theorem}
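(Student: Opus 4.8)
The plan is to prove equisatisfiability by tracking the set of solutions through each stage of Algorithm~\ref{glob} and showing that every step is \emph{solution-preserving} in the appropriate sense. The overall argument decomposes along the structure of the solver: first the preprocessing by \textsf{gen\_size\_leq}, then the main loop consisting of $\mathsf{STEP_S}$ and \textsf{remove\_neq}, and finally the cardinality step $\STEPCARD$. For each stage I would establish that the disjunction of output formulas has exactly the same solution set as the input (modulo the fresh variables, as the footnote indicates), and then compose these facts. Since solution-preservation is transitive, composing the per-stage claims yields the theorem.

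\medskip\noindent\textbf{Per-rule correctness.} The heart of the argument is that every rewrite rule $\phi \lfun \Phi_1 \lor \dots \lor \Phi_m$ in Figures~\ref{f:clpset} and~\ref{f:card1} is \emph{locally equisatisfiable}: under the interpretation structure $\iS$, a valuation $\sigma$ satisfies $\phi$ if and only if it can be extended (on the fresh variables) to satisfy some disjunct $\Phi_i$. I would verify this rule by rule, appealing to the set-theoretic semantics of Definition~\ref{app:def:int_funct} and, for the \CLPSET rules, to the equational axioms \eqref{Ab} and \eqref{Cl} that characterize set equality. For instance, rule~\eqref{eq:ext} is exactly the four-case analysis of set unification justified by absorption and commutativity-on-the-left; rule~\eqref{size:ext} is justified by the disjoint-union identity $\card{\{x\}\cup A}=1+\card{A}$ when $x\notin A$, versus absorption when $x\in A$. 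A single nondeterministic rewrite step replacing $\phi$ by $\Phi_1\lor\dots\lor\Phi_m$ inside a larger conjunction then preserves the solution set because satisfaction distributes over the surrounding $\land$ and $\lor$. Because $\mathsf{STEP_S}$ applies finitely many such steps and \textsf{remove\_neq} is likewise solution-preserving (its correctness is established in \ref{inequalities}), each iteration of the main loop, and hence the loop as a whole, preserves solutions; when the loop emits $\false$ this simply records that no disjunct survived, i.e.\ the input was unsatisfiable.

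\medskip\noindent\textbf{The cardinality stage.} For the final step I would invoke the correctness of $\Zarba$: by the discussion in Sections~\ref{integrating} and~\ref{concrete}, the irreducible formula $\Phi_1$ passed to $\STEPCARD$ is faithfully translated by the mapping $\map$ of~\ref{mapping} into an $\LZa$ formula whose models correspond exactly to the solutions of $\Phi_1$, and $\STEPCARD$ decides its satisfiability via the SAT/ILP encoding of phase three. The subtle point, which I would argue carefully, is that dropping the $\Phi_2$ constraints (the $=$, $\neq$ and $\notin$ literals not passed to $\STEPCARD$) does not change satisfiability: equalities $\dot X=t$ are eliminable because $\dot X$ occurs nowhere else by irreducibility clause~\ref{i:icfirst}, and the $\neq$/$\notin$ literals never constrain cardinalities because the universe of potential set elements is infinite (the argument sketched around the $1\notin S \land\dots\land 20\notin S$ example), so any cardinality witnessed by $\STEPCARD$ can be realized by choosing fresh elements avoiding the finitely many forbidden ones.

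\medskip\noindent The main obstacle I anticipate is precisely this last independence argument: making rigorous that a solution of the $\STEPCARD$-relevant fragment $\Phi_1$ can always be \emph{expanded} to a full solution of $\Phi_1\land\Phi_2$ by instantiating the set variables with concrete hereditarily-finite hybrid sets of the prescribed cardinalities while simultaneously respecting all the $\neq$ and $\notin$ side constraints. This requires exhibiting an explicit construction that picks, for each set variable, enough distinct witnesses from $D_\sSet\cup D_\sInt\cup D_\sUr$ to avoid the finitely many excluded values yet still meet the exact cardinalities and the $\Cup$/$\disj$ Venn-region structure fixed by the successful arrangement. The infinitude of $D$ is what makes this possible, but the bookkeeping that ties together the Boolean region assignment, the ILP witness for the $v_\pi$, and the avoidance of forbidden elements is where the real work lies; everything else reduces to the routine but numerous per-rule semantic checks described above.
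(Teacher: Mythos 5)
Your proposal is correct and follows essentially the same route as the paper: rule-by-rule verification that each rewrite rule's left- and right-hand sides have the same solution sets (modulo the fresh variables), composed over the stages of Algorithm~\ref{glob}, with the integer constraints passing through untouched. The only organizational difference is that the paper's proof of this theorem treats $\STEPCARD$ merely as a check that ``either returns $\false$ or has no influence on its input formula'' and defers the justification of that check---including the argument that discarding $\Phi_2$ is harmless---to Theorem~\ref{satisf}, whereas you fold that argument (and the attendant witness-construction concern) into this proof.
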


\begin{proof}
According to Definition \ref{irreducible}, each formula
$\Phi_i$ returned at the end of $\SATCARD$'s main loop is of the form
$\Phi^i_\Set \land \Phi^i_\Int$, where $\Phi^i_\Set$ is a $\CARD$-formula
in irreducible form and $\Phi^i_\Int$ contains all integer constraints
encountered during the processing of the input formula.
As concerns constraints in $\Phi_i^\Set$, the proof is based on showing that
for each rewrite rule the set of solutions of left and right-hand sides is the
same. For those rules dealing with constraints different from $\Size$ the
proofs can be found in \cite{Dovier00}. The proofs of equisatisfiability for
the rules for $\Size$ can be found in \ref{proofs}.
As concerns $\Phi^i_\Int$, no rewriting is actually performed on the
constraints occurring in it. Thus the set of solutions is trivially preserved.
Considering also the last step of $\SATCARD$, i.e., calling $\STEPCARD$, we
observe that this step is just a check which either returns $\false$ or has no
influence on its input formula.
%check has no influence on the $\Phi_{i_\Set}$ part of the formula. Indeed it
%does not rewrite any constraint in $\Phi_{i_\Set}$ that are passed to it;
%moreover, all new equalities and inequalities possibly returned by $\STEPCARD$
%are related only to integer variables; hence, they are confined to
%$\Phi_{i_\Int}$.
%Thus, assuming the implementation of $\STEPCARD$ preserves the set of
%solutions, then the whole $\SATCARD$ does.
\end{proof}

%\subsection{\label{solved} Satisfiability of solved form}

%The solved form constraints are selected so as to allow trivial verification of
%satisfiability of the returned formula as a whole.

\begin{theorem}[Satisfiability of the output formula]\label{satisf}
Any $\CARD$-formula different from $\false$ returned by $\SATCARD$ is
satisfiable w.r.t. the underlying interpretation structure $\iS$.
\end{theorem}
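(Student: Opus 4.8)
The plan is to exhibit, for any non-$\false$ output formula, an explicit successful valuation $\sigma$. Following the partition of Section \ref{integrating}, I would write the returned formula as $\Phi_1 \land \Phi_2$, where $\Phi_1$ collects the integer constraints together with the $\Cup$, $\disj$ and $\Size$ constraints (all with variable arguments, by Definition \ref{def:solved}), while $\Phi_2$ gathers the remaining irreducible constraints, i.e.\ the $\notin$ constraints and the $=$/$\neq$ constraints not involving integer terms. I would construct a valuation for the two parts separately and then argue that they can be merged without conflict.

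For $\Phi_1$, I would exploit the fact that $\STEPCARD$ did not return $\false$ on $\Phi_1$. By the way $\STEPCARD$ realises the third phase of $\Zarba$ (Section \ref{sizesolver}), this means there is an arrangement $\rho = \langle R,\Pi,at\rangle$ for which the integer part of $\Phi_1$ together with $\mathit{res}_\Int(\rho)$ is satisfiable over the integers. Fixing one such integer solution yields a positive cardinality $v_\pi$ for every Venn region $\pi \in \Pi$. I would then realise the set variables concretely: populate each region $\pi$ with $v_\pi$ pairwise distinct elements drawn from the domain $D$, and define the value of each set variable $\dot A$ as the union of the regions $\pi$ with $\pi(\dot A) = 1$. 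Conditions \ref{i:un} and \ref{i:disj} guarantee that the $\Cup$ and $\disj$ constraints hold; the last conjunct of \eqref{eq:res} guarantees that each $\Size(\dot A,m)$ holds, since the regions are disjoint and their cardinalities sum to $m^\iS$; and the integer constraints hold by the chosen solution.

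For $\Phi_2$, I would extend $\sigma$ to the residual constraints. The equalities $\dot X = t$ are handled last: by Definition \ref{def:solved}\ref{i:icfirst}, $\dot X$ occurs neither in $t$ nor elsewhere in $\Phi$, so setting $\dot X := t^\iS$ never clashes with the values already fixed. For the $\notin$ and $\neq$ constraints I would invoke the infinity of the interpretation domain $D$ (Definition \ref{def:int_dom}), exactly as anticipated in Section \ref{integrating}: the elements used to fill the Venn regions can be chosen fresh, avoiding the finitely many values $x^\iS$ and $t^\iS$ occurring in the constraints $x \notin \dot A$ and $\dot X \neq t$, while still supplying the exact number $v_\pi$ of elements demanded by the cardinality solution. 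This is also the reason the last step of $\Zarba$'s third phase, which checks that enough ur-elements exist, is dropped in our setting.

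The main obstacle I expect is precisely this merging step: reconciling the cardinality solution, which fixes the \emph{exact} size of each region, with the $\notin$ and $\neq$ constraints, which \emph{forbid} specific elements. The argument must show that the freedom to pick region elements from an infinite domain always suffices to meet both demands at once; making this quantitative---that a finite set of forbidden values can never exhaust the supply needed to reach each $v_\pi$---is where the infinity assumption on the universe does the real work. Once $\sigma$ is shown to satisfy every conjunct of $\Phi_1 \land \Phi_2$, and observing that the pure \CLPSET\ part is in any case satisfiable by \citeN[Theorem 9.4]{Dovier00}, we conclude that the output formula is satisfiable in $\iS$.
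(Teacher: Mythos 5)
Your proof follows the same route as the paper's: split the irreducible output into $\Phi_1$ (the integer, $\Cup$, $\disj$ and $\Size$ constraints) and $\Phi_2$ (the rest), appeal to the correctness of $\Zarba$ as realised by $\STEPCARD$ to settle $\Phi_1$, and argue that $\Phi_2$ can always be accommodated thanks to the infinity of the interpretation domain. The paper's proof stops at citing $\Zarba$ and the observation of Section~\ref{integrating} that $\Phi_2$ is irrelevant to the satisfiability of $\Phi_1$, whereas you unfold the arrangement into an explicit valuation and name the one delicate point --- filling Venn regions of exactly the prescribed cardinalities while avoiding the finitely many values forbidden by the $\notin$ and $\neq$ constraints --- that the paper leaves implicit; this is a more detailed rendering of the same argument, not a different one.
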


\begin{proof}
Basically, the proof of this theorem relies on the fact that $\STEPCARD$
implements $\Zarba$. Let $\Phi$ be the input formula and $\Phi'$ its
irreducible form right before $\STEPCARD$. Consider that $\Phi'$ is divided as
$\Phi'_1 \land \Phi'_2$ where $\Phi'_1$ contains the integer constraints and
the $\Cup$, $\disj$ and $\Size$ constraints; and $\Phi'_2$ all the other
constraints. Then, $\Phi'_1$ can be easily mapped to formulas which are
accepted by $\Zarba$ (see \ref{mapping}). As observed in Section
\ref{integrating}, $\Phi'_2$ is not passed to $\Zarba$
%(essentially, equalities and inequalities involving sets and ur-elements)
because is irrelevant as regards the satisfiability of $\Phi'_1$.  Then, the
satisfiability of $\Phi$ depends only on the satisfibility of $\Phi'_1$. Hence,
if $\STEPCARD$ decides that $\Phi'_1$ is satisfiable, we can conclude that
$\Phi$ is satisfiable. In this case $\SATCARD$ returns $\Phi$.
\end{proof}

Thanks to Theorems \ref{equisatisfiable} and \ref{satisf} we can conclude that,
given a $\CARD$-formula $\Phi$, then $\Phi$ is
satisfiable with respect to the intended interpretation structure $\iS$ if and
only if there is a non-deterministic choice in $\SATCARD(\Phi)$ that returns a
$\CARD$-formula different from $\false$. Conversely, if all the
non-deterministic computations of $\SATCARD(\Phi)$ terminate with $\false$,
then $\Phi$ is surely unsatisfiable.

%Example \ref{ex:solvedform} shows an example of a formula that is found to be
%satisfiable by $\SATCARD$, and Example \ref{ex:solution} shows how to obtain a
%possible solution for this formula.

The following is an example of a formula that $\SATCARD$ is able to detect to
be unsatisfiable.
\begin{example}
The formula
\begin{equation*}
\Cup(A,B,C) \land \Size(A,m_1) \land \Size(B,m_2) \land \Size(B,m_3) \land m_3
> m_1 + m_2
\end{equation*}
where all arguments are variables, is rewritten by $\SATCARD$ to $\false$;
hence, the formula is unsatisfiable. \qed
\end{example}

Note that many of the rewriting procedures given in the previous section will
stop even when returning relatively complex formulas.

\begin{example}\label{ex:solution}
Assuming all the arguments are variables, the formula:
\begin{equation*}
\Cup(A,B,C) \land \Size(A,m_1) \land \Size(B,m_2) \land \Size(B,m_3) \land m_3
\leq m_1 + m_2
\end{equation*}
is returned unchanged by $\SATCARD$ because there is no rewrite rule for
constraints such as $\Cup(A,B,C)$ and $\Size(A,m)$ when all arguments of sort
$\sSet$ are variables. Actually, this formula is proved to be satisfiable by applying
$\STEPCARD$. \qed
\end{example}

Finally, we can state the termination property for $\SATCARD$.

\begin{theorem}[Termination] \label{termination-glob}
The $\SATCARD$ procedure can be implemented as to ensure termination
for every input $\LCARD$ formula.
\end{theorem}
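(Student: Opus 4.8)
The plan is to reduce termination of Algorithm \ref{glob} to three independent facts: that \textsf{gen\_size\_leq} and $\STEPCARD$ each terminate, and that the main loop reaches a fixpoint after finitely many iterations. The first is immediate, since \textsf{gen\_size\_leq} performs a single pass over $\Phi$, adding one $0 \leq \cdot$ constraint per $\Size$ constraint. For $\STEPCARD$ I would argue as follows: the Boolean abstraction built from the $\Cup$ and $\disj$ constraints is a finite propositional formula over the finitely many set variables of $\Phi$, so Howe and King's SAT solver enumerates its models in finite time; the collected set $S$ is finite, hence $2^{S}$ is finite; and for each candidate arrangement the call to CLP(Q)'s \texttt{bb\_inf} terminates because, by Definition \ref{LCARD-terms} and the shape of the integer constraints emitted by the $\Size$ rules, the constraint store is always linear, a regime in which branch-and-bound is a terminating decision procedure. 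Thus $\STEPCARD$ inspects finitely many arrangements, each in finite time, and halts.

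The real work is the termination of the main loop, where $\mathsf{STEP_S}$ and \textsf{remove\_neq} are iterated to a fixpoint. Here the plan is to exhibit a well-founded complexity measure on $\CARD$-formulas that strictly decreases at every genuine rewrite step and is left unchanged only when a constraint is already irreducible in the sense of Definition \ref{def:solved}. For the rules not involving $\Size$ I would invoke the termination argument of the underlying \CLPSET solver \cite{Dovier00} essentially unchanged. It then remains to extend the measure to cover the $\Size$ rules of Figure \ref{f:card1}. I would take a tuple ordered lexicographically whose leading components count, for instance, the number of compound arithmetic expressions occurring as the second argument of a $\Size$ constraint (strictly reduced by rule \eqref{size:expr}, which leaves a fresh integer variable that cannot re-trigger the rule) and the total structural size of non-variable set terms occurring as the first argument of a $\Size$ constraint (strictly reduced by \eqref{size:ext}, which in either disjunct replaces the extensional term $\{x \plus A\}$ by a structurally smaller set, and by \eqref{size:empty}--\eqref{size:zero}, which eliminate the set term outright). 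Since each $\Size$ rule either removes a reducible pattern or passes to a smaller set, and the non-$\Size$ rules already decrease the \CLPSET part of the measure, the combined measure strictly decreases, and well-foundedness of the lexicographic order on tuples of naturals yields termination of the inner loop.

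The main obstacle will be the interaction between the two loops and the fresh variables introduced on the right-hand sides of the rules. Several rules---\eqref{eq:ext}, \eqref{un:ext1} and \eqref{size:ext} in particular---introduce fresh set or integer variables together with fresh $\Cup$ or $\Size$ constraints, so a naive application can in principle keep producing new reducible constraints; this is precisely why the theorem is phrased as ``can be implemented as to ensure termination''. The plan is to lean on the concrete control strategy fixed after Definition \ref{d:rw_rules}: rules are tried in the order listed, only the first matching constraint is rewritten, and the side conditions of the form $x \notin \dot N$ appearing in \eqref{un:ext1} and \eqref{size:ext} block the regeneration of the membership chains that would otherwise denote one and the same set through infinitely many terms. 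Under this discipline every fresh constraint introduced is strictly smaller in the measure than the constraint it replaces, so no rule can fire unboundedly. Finally, for the outer loop I would show that \textsf{remove\_neq} terminates on its own (it only eliminates $\neq$ constraints on set variables; see \ref{inequalities}) and that each outer iteration that is not already a fixpoint strictly decreases the same measure; since the measure cannot decrease forever, the outer \textbf{repeat} reaches $\Phi = \Phi'$ after finitely many rounds, and the whole procedure halts.
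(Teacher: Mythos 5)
Your proposal is correct and follows essentially the same route as the paper: termination of the \CLPSET rules is delegated to Theorem 10.10 of Dovier et al., termination of $\STEPCARD$ to the finiteness of the arrangement space and to Zarba's own termination theorem, and the only genuinely new point is that the $\Size$ rules of Figure \ref{f:card1} either recurse on a structurally smaller set term or produce a $\Size$ constraint whose first argument is a fresh variable and hence irreducible. The paper states this last observation qualitatively rather than packaging it into an explicit lexicographic measure; if you do build the measure, be aware that substituting a solved equality $\dot{X} = t$ through the formula can turn a variable first argument of a $\Size$ constraint back into a non-variable set term, so the $\Size$-related components must sit below the unification components in the lexicographic order rather than decreasing at every single step.
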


\begin{proof}
Termination of the $\SATCARD$ is a consequence of the termination
proved in Theorem 10.10 in \cite{Dovier00} and Zarba's
algorithm \cite[Theorem 3]{DBLP:conf/frocos/Zarba02}. The only new observations
to be done concern the treatment of $\Size$ constraints. Looking at the rewrite
rules for this kind of constraints shown in Figure \ref{f:card1}, we can
observe that: they generate equality and inequality constraints (in fact,
$\not\in$ constraints are rewritten to $\neq$ constraints), which in turn do
not generate any new $\Size$ constraint; besides, they generate new $\Size$
constraints which, however, are in irreducible form, since their first argument
is a (fresh) variable. Therefore, the processing of $\Size$ constraints cannot
trigger any infinite loop.
\end{proof}

%%%%%%%%%%%%%%%%%%%%%%%%%%%%%%%%%%%%%%%%%%%%%%%%%%%%%%%%
\section{\label{minimal}Minimal solutions}
%%%%%%%%%%%%%%%%%%%%%%%%%%%%%%%%%%%%%%%%%%%%%%%%%%%%%%%%

The formulas $\Phi_1,\dots,\Phi_n, n \ge 1,$ returned by $\SATCARD$ represent
all the concrete (or ground) \emph{solutions} of the input formula $\Phi$. If
these formulas do not contain any $\Size$ or integer constraints, then it is
quite easy to get concrete solutions from them. Indeed, a successful assignment
of values to variables (i.e., a concrete solution) for such formulas is
obtained by substituting each set variable occurring in them by the empty set,
with the exception of the variables $X$ in atoms of the form $X = t$.

Unfortunately, when it comes to the $size$ and integer constraints, providing
concrete solutions for certain $\LCARD$-formulas may be difficult.

\begin{example}\label{ex:size_slvfrm}
If $\SATCARD$ is called on the following formula:
\begin{equation*}
\Size(A,m) \land 1 \leq m \land B \subseteq A
\land \Size(B,n) \land 5 \leq n
\end{equation*}
it will return the same formula meaning that it is satisfiable. However, a
solution is not evident. \qed
\end{example}

For some applications such as model-based testing \cite{CristiaRossiSEFM13}
determining the satisfiability of a formula is not enough. More explicit
solutions are needed. For this reason we provide a way in which $\SATCARD$
returns formulas for which finding a solution is always easy. We call such a
solution \emph{minimal} because no cardinality of a set assigned to a variable
appearing in a $size$ constraint can be lowered without making the formula
false. However, in this case we cannot get a finite representation of the set
of all possible solutions.
%but a representation of a subset of the possible solutions for the input
%formula for which it is easy to obtain a concrete solution.

Let $\Phi$ be a satisfiable input formula and let $\Phi'$ the corresponding
formula right before $\STEPCARD$ is called. Let $\Size(A_1,m_1), \dots,
\Size(A_k,m_k)$ be all the $\Size$ constraints in $\Phi'$. If $\SATCARD$ is
required to compute the minimal solution, once Algorithm \ref{glob} finishes,
it is called again with the following formula:
\begin{equation}\label{eq:minimal}
  \Phi' \land \bigwedge_{i=1}^{k} m_i = V_i
\end{equation}
where $\langle V_1,\dots,V_k \rangle$ is the $Vertex$ computed in
\eqref{eq:bbinf}. In this way all sets $A_i$ of the $\Size$ constraints in
$\Phi'$ are bound to bounded sets of least possible cardinality so as to
satisfy $\Phi$. Note that, necessarily, $0 \leq V_i$, for $i \in [1,k]$.

Besides, when $\SATCARD$ runs in this mode it will not call $\STEPCARD$ to
solve \eqref{eq:minimal}. In fact, $\bigwedge_{i=1}^{k} m_i = V_i$ turns all
$\Size$ constraints in $\Phi'$ into atoms of the form $\Size(\dot{A},c)$ with
$c$ a constant. Then, the following rewrite rule is activated:
\begin{equation}
\Size(\dot A,c) \text{, $c$ an integer constant} \lfun
  \dot A = \{\dot{n}_1,\dots,\dot{n}_c\}
  \land ad(\dot{n}_1,\dots,\dot{n}_c) \label{size:const3}
\end{equation}
where $ad(y_1,\dots,y_c)$ is a shorthand for $\bigwedge_{i=1}^{c-1}
\bigwedge_{j=i+1}^c y_i \neq y_j$.

\begin{example}\label{ex:minsol}
If $\SATCARD$ is called on the formula of Example \ref{ex:size_slvfrm} but
requiring that all minimal solutions be computed, then the formula returned at
the end of the computation is:
\begin{flalign*}
  \quad\quad & A = \{n_5,n_4,n_3,n_2,n_1\}, m = 5,
   B = \{n_5,n_4,n_3,n_2,n_1\}, n = 5,
   %\textsf{Constraint: } n_5 \neq n_4, n_5 \neq n_3, n_5 \neq n_2, \dots \\
   ad(n_5,n_4,n_3,n_2,n_1) &
\end{flalign*}
This formula is a finite representation of a subset of the
possible solutions for the input formula
%(note that all the \texttt{\_N*} remain variable)
from which it is trivial to get concrete solutions. \qed
\end{example}

%%%%%%%%%%%%%%%%%%%%%%%%%%%%%%%%%%%%%%%%%%%%%%%%%%%%%%%%
\section{\label{impl} The Implementation and its Empirical Evaluation}
%%%%%%%%%%%%%%%%%%%%%%%%%%%%%%%%%%%%%%%%%%%%%%%%%%%%%%%%

$\SATCARD$ is implemented by extending the solver provided by the publicly
available tool \setlog \cite{setlog}. \setlog is a Prolog program that can be
used as a constraint solver, as  a satisfiability solver and as a constraint
logic programming language. It also provides some programming facilities not
described in this paper. In this section we describe and empirically evaluate
this implementation.

The main syntactic differences between the abstract syntax used in previous
sections and the concrete syntax used in \setlog are made evident by the
following examples.

\begin{example}\label{ex:setlogformulas}
The formulas of Example \ref{ex:formulas} are written in \setlog as follows:
\begin{verbatim}
   a in A & a nin B & un(A,B,C) & C = {X / D}.

   un(A,B,C) & N + K > 5 & size(C,N) & B neq {}.
\end{verbatim}
where names beginning with a capital letter represent variables, and all others
represent constants and function symbols. This is why we renamed some variables
w.r.t. the formulas in Example \ref{ex:formulas}. Note that \verb+{_/_}+ is the
concrete syntax for the set term $\{\_\plus\_\}$.

If \setlog is asked to solve the second formula it returns the following:
\begin{verbatim}
   B = {_N3/_N2}, C = {_N3/_N1}
   Constraint: un(A,_N2,_N1), N + K > 5, _N3 nin _N1,
               size(_N1,_N4), _N4 >= 0, N >= 1, _N4 is N - 1
\end{verbatim}
as the first solution (more can be obtained interactively). That is, \setlog
binds values to $B$ and $C$ and gives a list of constraints in irreducible form
(which is guaranteed to be satisfiable). Any concrete solution must bind values
to the remaining variables in such a way as to verify the constraints.
Variables beginning with the underscore symbol (\verb+_+) represent new
variables.
%Recall this formula doesn't comply with my proposed solved form for integer
%constraints. For instance one must guess a value for variable K.
\qed
\end{example}

The implementation in \setlog of $\mathsf{STEP_S}$ consists in  adding to
\setlog the rewrite rules of Figure \ref{f:card1}. Due to the design of
\setlog, adding new constraints and their rewrite rules is easy, and it does
not deserve to be further commented here. On the other hand, the implementation
in \setlog of $\STEPCARD$ is basically that described in Section
\ref{concrete}.

Observe
that the fact that \setlog is based on set unification automatically provides
cardinality over sets of sets---nested at any level. For instance, running
\verb+size({{X},{Y}},N)+ produces two solutions:
\begin{verbatim}
   N = 2, X neq Y;
   Y = X, N = 1
\end{verbatim}

Concerning formulas with $\Size$ constraints, by default \setlog decides their
satisfiability as described in Section \ref{sizesolver}. That is, if the
formula of Example \ref{ex:size_slvfrm} is executed, \setlog will find it
satisfiable and will return it unchanged. If users want more concrete
solutions, as described in Section \ref{minimal}, they must execute command
\verb+fix_size+ to activate the algorithm that computes minimal solutions.
In this case, after solving the formula of Example \ref{ex:size_slvfrm},
\setlog would return exactly the solution shown in Example \ref{ex:minsol}. As
another example, when solving the second formula of Example \ref{ex:setlogformulas}
 in \texttt{fix\_size} mode, \setlog will return (as its first
solution):
\begin{verbatim}
   A = {}, B = {_N1}, C = {_N1}, N = 1
   Constraint: 1 + K > 5
\end{verbatim}
which is indeed a more concrete solution for the given formula.

\subsection{\label{formver}Applications to formal verification}

We now present a simple example showing how \setlog can be used as a
verification tool of problems involving cardinality constraints. In doing so we
will show how our approach differs from other tools that can deal with similar
problems---see Section \ref{related} for a detailed account. More than 250
real-world examples have been used in the empirical evaluation presented in
Section \ref{empirical} and another example is developed in
\ref{app:datacontainer}. The example is taken from \citeN{Kuncak2006}. Figure
\ref{f:insert} shows the $\mathsf{insert}$ procedure which inserts an element
$e$ into the set $content$. Besides, the procedure maintains the cardinality of
$content$ in variable $size$. In this context an \textsf{element} is an object
represented as a set of cardinality one. The procedure is annotated with its
preconditions (i.e., \textsc{requires}), its postconditions (i.e.,
\textsc{ensures}) and the invariant it preserves (i.e., \textsc{maintains}).
Kuncak then proposes a verification condition for the \textsf{insert}
procedure.

\begin{figure}
\raggedright
\textsf{var} $content$:\textsf{set}; $size$:\textsf{integer}; \\[1mm]
\textsf{procedure insert}($e$:\textsf{element}) \{
  \hfill {\footnotesize
  [\textsc{requires:}
    $\card{e} = 1 \land \card{e \cap content} = 0$]}\\
\hspace{3mm}$content := content \cup e$;
  \hfill {\footnotesize
  [\textsc{maintains:} $size = \card{content}$] \\
\hspace{3mm}$size := size + 1$}; \\
\} \hfill {\footnotesize
   [\textsc{ensures:} $size' > 0$]}

\caption{\label{f:insert}Procedure $\mathsf{insert}$ inserts $e$ into set
$contents$ and updates its cardinality in $size$}
\end{figure}

The \setlog representation of \textsf{insert} is the following:
\begin{Verbatim}[commandchars=\\\$\$]
sl_insert(Content,Size,E,Content_,Size_) :-
  un(Content,E,Content_) & \hfill\small\rm [\(content := content \cup e\)]
  Size_ is Size + 1. \hfill\small\rm [\(size := size + 1\)]
\end{Verbatim}
where \verb+Content+ and \verb+Size+ are the initial values and
 \verb+Content_+ and \verb+Size_+ the final ones. In this way,
 \verb+sl_insert+ becomes a \setlog \emph{program} and thus it
can be  executed as any other program and can be part of a larger
Prolog+\setlog program. For example the query:
\begin{verbatim}
   sl_insert({},0,{hellow},C1,S1).
\end{verbatim}
returns:
\begin{verbatim}
   C1 = {hellow}, S1 = 1
\end{verbatim}
and the following one:
\begin{verbatim}
   sl_insert({},0,{hellow},C1,S1) & sl_insert(C1,S1,{world},C2,S2).
\end{verbatim}
returns:
\begin{verbatim}
   C1 = {hellow}, S1 = 1, C2 = {hellow,world}, S2 = 2
\end{verbatim}

%Is there any way in which we can hide C1 and S1 from the answer? I've found
%that library(yall) lets you do that. }
%But I guess that from the setlog environment wouldn't be possible... In
%SWI-Prolog you can set not to print the variables that start with underscore:}
%but it doesn't work from the setlog environment. Anyway is a stupid point...}

Furthermore, \verb+sl_insert+ is also a \emph{formula}. Indeed, we can
discharge the verification condition indicated by Kuncak \emph{using the same
representation} of \textsf{insert} by simply encoding the negation of the
verification condition as a \setlog query:
\begin{Verbatim}[commandchars=\\\$\$]
   size(E,1) & inters(E,Content,M1) & size(M1,0) & \hfill\small \rm[precondition]
   size(Content,Size) & \hfill\small\rm[invariant@before state]
   sl_insert(Content,Size,E,Content_,Size_) & \hfill\small \rm[\sf insert\rm is executed]
   (Size_ =< 0 \hfill\small \rm[negation of postcondition@after state]
    or
    size(Content_,M2) & M2 neq Size_ \hfill\small \rm[negation of invariant@after state]
   ).
\end{Verbatim}
If the answer is \verb+no+ it means the query is unsatisfiable for all values
of the variables, and so the verification condition is a theorem. \setlog runs
this query in 0.016 seconds.

As the example shows, the \setlog representation of \textsf{insert} is both a
formula (or executable specification) \emph{and} a program (or prototype,
because of its lack of efficiency). Or put it in another way, \setlog is the
very same tool that executes \textsf{insert} and \emph{automatically} proves
its correctness. We think this is a rare characteristic in verification tools
dealing with cardinality constraints. \setlog has been used in the same fashion
on real-world problems (\citeANP{DBLP:journals/jar/CristiaR21}
\citeyearNP{DBLP:journals/jar/CristiaR21,Cristia2021}).

\subsection{Improvements}

In this section we present some improvements recently made to
\setlog to render it a more usable tool.

\paragraph{Derived constraints.}
As shown in Section \ref{expressiveness}, many set operators in
\setlog are defined as derived constraints, i.e., as $\CARD$-formulas built out
of the primitive constraints that $\LCARD$ offers. For example, the predicate
$\Cap(A,B,C)$, which is true when $C$ is the intersection between sets $A$ and
$B$, can be defined as a derived constraint as follows:
\[
\Cap(A,B,C) \defs
 \Cup(C,N_1,A) \land \Cup(C,N_2,B) \land N_1 \disj N_2
\]

This approach is good from a theoretical perspective because it keeps the
language, proofs and implementation to a minimum. However, it pays the price of
reduced efficiency which, in the end, makes the tool less interesting from a
practical perspective. Therefore, we move some key set constraints from derived
constraints to \emph{built-in} constraints by defining and implementing
possibly recursive rewriting procedures for them. Specifically, we select
$\Cap$, $\subseteq$, and $\Diff$ (for set difference) to be implemented as
built-in constraints. The main new rewrite rules for these constraints can be
found in an on-line document \cite{calculusBR}).

\paragraph{Inference rules.}

In order to further improve the efficiency of our solver we introduce
special rewrite rules---hereafter simply called \emph{inference rules}---that
allow new $\Size$ and integer constraints to be inferred from the irreducible
constraints. The presence of these additional constraints will allow the solver
to detect more efficiently certain classes of unsatisfiable formulas.

Some significant inference rules are shown in Figure \ref{fig:infrule}.

\begin{figure}
\hrule\vspace{3mm}
 \raggedright
 \quad\quad If $X$ is any of $\dot{A}_i$; $m$ is any of $\dot{m}_i$; $\dot{m}_i$ is the cardinality of
$\dot{A}_i$; then:
\begin{flalign}
 \quad\quad \quad\quad & \Cup(\dot{A}_1,\dot{A}_2,\dot{A}_3) \land \Size(X,\dot{m}) \lfun
\Cup(\dot{A}_1,\dot{A}_2,\dot{A}_3) \land \dot{m}_3 \leq
\dot{m}_1 + \dot{m}_2 \bigwedge_{i=1,2,3}\Size(\dot{A}_i,\dot{m}_i) & \label{rule:infer_un}  \\[2mm]
 \quad\quad \quad\quad & \Cap(\dot{A}_1,\dot{A}_2,\dot{A}_3) \land \Size(X,\dot{m}) \lfun
\Cap(\dot{A}_1,\dot{A}_2,\dot{A}_3) \bigwedge_{i=1,2,3}\Size(\dot{A}_i,\dot{m}_i) \bigwedge_{i=1,2} \dot{m}_3 \leq \dot{m}_i & \label{rule:infer_inters}
\end{flalign}
\hrule
 \caption{Rule scheme for $\Size$ inference rules}
\label{fig:infrule}
\end{figure}

\begin{comment}
\begin{figure}
\hrule\vspace{3mm}
 \raggedright
 \quad\quad If $X$ is any of $\dot{A}$, $\dot{B}$ and
$\dot{C}$; $\dot{m}_a$, $\dot{m}_b$ and $\dot{m}_c$ are the cardinalities of
$\dot{A}$, $\dot{B}$ and $\dot{C}$, respectively, then:
\begin{flalign}
 \quad\quad \quad\quad & \Cup(\dot{A},\dot{B},\dot{C}) \land \Size(X,\dot{m}) \lfun
\Cup(\dot{A},\dot{B},\dot{C}) \land \Size(X,\dot{n}) \land \dot{m}_c \leq
\dot{m}_a + \dot{m}_b & \label{rule:infer_un}  \\[2mm]
%
 \quad\quad \quad\quad & \Cap(\dot{A},\dot{B},\dot{C}) \land \Size(X,\dot{m}) \lfun
\Cap(\dot{A},\dot{B},\dot{C}) \land \Size(X,\dot{n}) \land \dot{m}_c \leq
\dot{m}_a \land  \dot{m}_c \leq \dot{m}_b & \label{rule:infer_inters}
\end{flalign}
\hrule
 \caption{Rule scheme for $\Size$ inference rules}
\label{fig:infrule}
\end{figure}
\end{comment}

\begin{example}
Proving a formula such as $ B = A_1 \cup \dots \cup A_{20} \land
\sum_{i=1}^{20} \card{A_i} < \card{B} $ which can be easily written in \setlog
by using $\Cup$, $\Size$, $=$ and $<$ constraints, would cause an exponential
explosion in $\STEPCARD$. Instead, by implementing the first inference rule
shown in Figure \ref{fig:infrule} the unsatisfiability is found in a few
milliseconds. In fact, the introduction of this rule eliminates the exponential
explosion for this class of formulas. \qed
\end{example}

Hence, we extend Algorithm \ref{glob} by properly adding new calls to the
inference rules inside $\STEPCARD$, just before starting the third phase of
$\Zarba$. If $\Phi_1$ is the formula received by $\STEPCARD$ and ${\Phi_1}'$
the one obtained from $\Phi_1$ after applying the inference rules, then CLP(Q)
is called on the integer subformula of ${\Phi_1}'$. If CLP(Q) fails, then the
whole computation fails and the input formula is unsatisfiable; if not, the
third phase of $\Zarba$ is started with $\Phi_1$.

%%%%%%%%%%%%%%%%%%%%%%%%%%%%%%%%%%%%%%%%%%%%%%%%%%%%%%%%
\subsection{\label{empirical}Empirical evaluation}
%%%%%%%%%%%%%%%%%%%%%%%%%%%%%%%%%%%%%%%%%%%%%%%%%%%%%%%%
In this section we present the results of the empirical evaluation we conducted
in order to evaluate how well the implementation of $\SATCARD$ in \setlog
performs in practice. In previous papers, we have evaluated other aspects of
\setlog such as its efficiency in producing model-based test cases
\cite{CristiaRossiSEFM13}; how well it deals with relational constraints
\cite{DBLP:journals/jar/CristiaR20} and restricted intensional sets
(\citeANP{DBLP:conf/cade/CristiaR17}
\citeyearNP{DBLP:conf/cade/CristiaR17,DBLP:journals/jar/CristiaR21a}); and we
have applied it to industrial-strength case studies such as the Bell-LaPadula
security model \cite{DBLP:journals/jar/CristiaR21} and the Tokeneer project \cite{Cristia2021}.

The empirical evaluation consists of two experiments where \setlog is asked to
determine the satisfiability of a collection of $\LCARD$ formulas. We measure
how many of those formulas are solved and the time spent in doing so. In both
experiments we use a 2 s timeout and the computing times are those of the
solved problems. The data set to reproduce these experiments can be downloaded
from \url{http://people.dmi.unipr.it/gianfranco.rossi/SETLOG/size.zip} (the
technical details can be found in \ref{app:experiments}). These experiments do
not use nested sets.

As shown in Table \ref{t:eval}, the first experiment is performed over a
collection of 468 $\LCARD$ formulas. These formulas are taken from different
sources:
\begin{itemize}
\item
\textsc{Tests}. These are simple cardinality formulas of our own.
%some of which are meant to challenge the solver.
  %
\item
\textsc{Properties}. These are formulas related to typical cardinality
properties such as $\card{A \cup B} \leq \card{A} + \card{B}$.
%The collection contains the negation of these properties (so the solver is
%expected to find them unsatisfiable) as well as satisfiable variations of the
%properties.
  %
\item
\textsc{CVC4}. These are problems used by \citeN{Bansal2018} as a benchmark for
the implementation of cardinality constraints in the CVC4 SMT solver plus
problems derived from these.
%by substituting some set operators (e.g., $\cap$) by some other (e.g.,
%$\cup$).
  %
\item
\textsc{Kuncak}. These are the five examples of program verification used by
\citeN{Kuncak2006} to show their algorithm that solves \textsf{BAPA} formulas.
\textsf{BAPA} is discussed in Section \ref{related}.
\item
\textsc{ssl-reachability}. This is the collection of problems used by
\citeN{Piskac2020} to evaluate their method based on a $\mathrm{LIA}^*$
encoding.
%of problems with multisets and cardinality constraints.
$\mathrm{LIA}^*$ is briefly discussed in Section \ref{related}.
\end{itemize}

\begin{table}
\caption{\label{t:eval}Results of the first experiment}
\begin{tabular}{lrrrrrrr}
\hline \multicolumn{1}{c}{\textsc{Collection}} &
\multicolumn{1}{c}{\textsc{\#}} & \multicolumn{2}{c}{\textsc{Satisfiable}} &
\multicolumn{2}{c}{\textsc{Unsatisfiable}} & \multicolumn{1}{c}{\textsc{\%}} &
\multicolumn{1}{c}{\textsc{Time}} \\
 &
 &
 \textsc{Slvd} &
 \textsc{Uslvd} &
 \textsc{Slvd} &
 \textsc{Uslvd} &
 & \\\hline
\textsc{Tests} & 150 & 98 & 0 & 52 & 0 & 100 & 0.5 s \\
\textsc{Properties} & 53 & 14 & 0 & 36 & 3 & 94 & 3.8 s \\
\textsc{CVC4} & 20 & 8 & 0 & 12 & 0 & 100 & 2.5 s \\
\textsc{Kuncak} & 5 & 0 & 0 & 5 & 0 & 100 & 0.0 s \\ % 0.08
\textsc{ssl-reachability} & 240 & 130 & 13 & 90 & 7 & 92 & 19.1 s \\\hline
\textsc{Total} & 468 & 250 & 13 & 195 & 10 & 95 & 25.9 s \\\hline
\end{tabular}
\end{table}

As can be seen, \setlog solves 95\% of the problems in 25.9 s, meaning an
average of 0.06 s per problem. Even if the first collection is not considered,
\setlog solves 93\% of the resulting 318 problems in 25.4 s, thus making 0.09 s
per problem. In particular, \setlog solves all the problems in the
\textsc{CVC4} and \textsc{Kuncak} collections. It also solves 92\% of the
\textsc{ssl-reachability} collection in 19.1 s (0.09 s on average) whereas
Piscak et al. manage to solve 76\% of them in 59 s (0.3 s in
average)\footnote{Piscak et al. run their evaluation on a 2018 MacBook Pro
running OS X Mojave 10.14.5 with a 2.9 GHz Intel Core i9 processor and 32GB of
RAM. Our hardware platform is older and less powerful, see below.} \cite[Table
1]{Piskac2020}. If the timeout is set to 50 s, as done by Piscak, \setlog
manages to solve 11 more problems thus solving 96\% of them (although it needs
considerably more time as some problems are solved only after several seconds).

The second experiment concerns the evaluation of $\SATCARD$ when computing
minimal solutions---cf. Section \ref{minimal} and command \verb+fix_size+ given
in Section \ref{impl}. Then, we run \setlog on the 250 satisfiable problems of
Table \ref{t:eval} that the tool is able to solve. The results are given in
Table \ref{t:evalfixsize}. This experiment sheds some light on the efficiency
of \setlog in constructing more concrete solutions of satisfiable problems. As
can be seen, \setlog is able to produce a more concrete solution to 99\% of the
satisfiable problems in 0.07 s on average. Note that the tool is not able to
find a concrete solution for three formulas whose satisfiability, nonetheless,
it was able to ascertain.

%The minimal solutions of these problems concern solving a set formula where
%sets have many variable elements making the problem computationally hard. In
%these cases, to get more concrete solutions, users can ask $\setlog$ for the
%cardinalities of the minimal solution (by means of command \verb+show_min+)
%and then rerun the formula by binding one of the sets to a set of arbitrary
%constants of the indicated cardinality.
Even if the first collection of problems is removed from this experiment,
\setlog solves 99\% of the problems in 0.1 s on average.

\begin{table}
\caption{\label{t:evalfixsize}Results of the second experiment}
\begin{minipage}{.65\textwidth}
\begin{tabular}{lrrrrrrr}
\hline \multicolumn{1}{c}{\textsc{Collection}} &
\multicolumn{1}{c}{\textsc{\#}} & \multicolumn{2}{c}{\textsc{Satisfiable}} &
\multicolumn{1}{c}{\textsc{\%}} &
\multicolumn{1}{c}{\textsc{Time}} \\
 &
 &
 \textsc{Slvd} &
 \textsc{Uslvd} &
 & \\\hline
\textsc{Tests} & 98 & 97 & 1 & 99 & 0.4 s \\
\textsc{Properties} & 14 & 14 & 0 & 100 & 0.1 s \\
\textsc{CVC4} & 8 & 8 & 0 & 100 & 0.7 s \\
\textsc{ssl-reachability} & 130 & 128 & 2 & 98 & 15.3 s \\\hline \textsc{Total} & 250 & 247 & 3 & 99 & 16.5 s \\\hline
\end{tabular}
\end{minipage}
\end{table}

%The empirical evaluation provides evidence that $\setlog$ is useful in practice
%as it performs no worse than other general tools such as CVC4 and better than
%special purpose algorithms such as those by Kuncak and Piscak.

%%%%%%%%%%%%%%%%%%%%%%%%%%%%%%%%%%%%%%%%%%%%%%%%%%%%%%%%
\subsection{\label{discussion}Discussion}
%%%%%%%%%%%%%%%%%%%%%%%%%%%%%%%%%%%%%%%%%%%%%%%%%%%%%%%%

In spite of initial theoretical concerns, the empirical evaluation presented in
Section \ref{empirical} shows that, in practice, \setlog's implementation of
$\Zarba$ performs no worse than other approaches and better than special
purpose algorithms such as those by Kuncak and Piscak. It is true, however,
that in the worst case the exponential complexity of the algorithm makes it
unfit for certain problems. We can see that in the unsolved problems (23 out of 468) of Table
\ref{t:eval}.

Broadly speaking, \setlog's implementation of $\SATCARD$ goes through three
phases: \emph{a)} solve the formula with minimal concern about cardinality;
\emph{b)} compute the set of solutions of a Boolean formula derived from the
irreducible form (cf. Definition \ref{def:solved}); and \emph{c)} solve an
integer linear programming problem \emph{for each} subset of the Boolean
solutions, which presupposes the powerset of the set of Boolean solutions being
computed. Each phase of $\SATCARD$ is inherently exponential, at least, in the
worse case.

However, according to our experiments, the worst of these three problems is
\emph{c)}. Its most demanding part is not the computation of the powerset
itself but solving the integer problem for each of its elements. In fact,
\setlog uses backtracking in such a way as to avoid computing the powerset
explicitly. This problem bears some relationship with the number of set
variables of the input formula, but this is neither evident nor direct. For
example a formula such as $A_1 \cup \dots \cup A_{50} = \emptyset \land
\card{A_{43}} > 2*k + 5$ is solved in virtually no time, while a formula with
fewer variables but where $\cup$ is substituted by $\cap$ will take an
exponential time. As we have noted, the real problem is the number of solutions
returned by step \emph{b)} which determines the size of the powerset.
Unfortunately, the relationship between the input formula and the number of
solutions of the Boolean problem is complex. For example, $A_1 \cup \dots \cup
A_{50} = B$ will generate many more Boolean solutions than $A_1 \cup \dots \cup
A_{50} = B \land \bigwedge_{i=1}^{49} A_i \disj A_{i+1}$. To worsen things, if
the number of set variables is large, the integer problem to be solved for each
element of the powerset becomes increasingly more complex, consuming a non
negligible time. On the other hand, a palliative to deal with \emph{c)} is the
fact that the problem is inherently parallelizable.

The introduction of inference
rules proved to be a good method to avoid many of the exponential problems we
have discussed above. As long as the application of inference rules remains
polynomial in the size of the formula received by $\STEPCARD$, it will be, on
average, better to add them than not. It remains as an open problem whether or
not there is a set of inference rules applicable in polynomial time
constituting a decision procedure for $\LCARD$. We believe the answer is no.

%%%%%%%%%%%%%%%%%%%%%%%%%%%%%%%%%%%%%%%%%%%%%%%%%%%%%%%%
\section{\label{related}Related work}
%%%%%%%%%%%%%%%%%%%%%%%%%%%%%%%%%%%%%%%%%%%%%%%%%%%%%%%%

Computable Set Theory (CST) has studied the problem of deciding the
satisfiability of set formulas involving cardinality constraints since a long
time ago \cite{DBLP:conf/cade/FerroOS80}\cite[Chapter
11]{DBLP:series/mcs/CantoneOP01}. In these works cardinality formulas are
encoded as additive arithmetic formulas over the natural numbers.
\citeN{hibti1995} proves the decidability of a similar problem by encoding it
as a propositional consistency problem.

Zarba's work is rooted in CST and thus relies on the notion of \emph{place} as
a way to represent Venn regions. This notion is used only inside $\STEPCARD$.
Zarba also proves that a theory of multisets, without the cardinality operator,
is decidable \cite{DBLP:conf/cade/Zarba02}. Later on, Zarba proved that a
theory of (not necessarily finite) sets, including the cardinality operator,
combined with a theory of cardinal numbers is decidable
\cite{DBLP:journals/jar/Zarba05}.

In the field of Constraint Logic Programming a number of proposals have been
put forward introducing \emph{set constraints}, possibly including cardinality
\cite{DBLP:journals/constraints/Azevedo07,DBLP:journals/constraints/Gervet97,DBLP:journals/jair/HawkinsLS05}.
In these proposals, constraint (set) variables have a \emph{finite domain}
attached to them, which is exploited by the solver to efficiently compute
simplified forms of the original constraints or to detect failures. The same
approach is adopted in the constraint modeling language MiniZinc
\cite{minizinchandbook}. While the availability of finite domains for
constraint variables allows efficient handling of set constraints, it actually
prevents the user from using the solver as a general theorem prover. On the contrary, this is
feasible in \setlog where constraint variables do not require finite domains.
For example, proving the property $\forall A,B,n: A \subseteq B \land
\card{A}=n \land \card{B}=n \implies A = B$, can be done in \setlog by checking
that the formula
 \verb!subset(A,B) & size(A,N) & size(B,N) & A neq B!
is unsatisfiable. The same general result cannot be achieved for instance in
MiniZinc, since set variables $A$ and $B$ (declared as ``decision variables''
in MiniZinc) must have a fixed domain attached to them---e.g.,
 \verb!var set of 0..100: A!.
Thus, we can write the formula in MiniZinc but what we prove is not as general
as in \setlog: if we get an \verb!UNSATISFIABLE! answer from MiniZinc it does
not mean we have proved the (general) property, while in \setlog it does.
Furthermore, set elements in \setlog can be of any type, including unbounded
constraint variables and other sets, which are not allowed in MiniZinc and in
other related proposals for set constraints.

V. Kuncak and his colleagues have worked on the decidability of the first-order
multisorted theory \textsf{BAPA} and its applications to program verification
\cite{Kuncak2006}. \textsf{BAPA} extends the combination of the theory of
Boolean algebras of sets (BA) and Presburguer arithmetic (PA). In this way
\textsf{BAPA} can deal with formulas where the cardinality of a set is treated
as an integer variable subjected to PA constraints. Kuncak's algorithm reduces
a \textsf{BAPA} sentence to an equivalent \textsf{PA} sentence. In this way,
the algorithm enjoys several nice properties (e.g., its complexity is no worse
than an optimal algorithm for deciding \textsf{PA}). This implies that the
complexity of Kuncak's algorithm is identical to the complexity of PA. Besides,
the algorithm can eliminate quantifiers from a \textsf{BAPA} formula thus
turning this into a quantifier-free \textsf{BAPA} formula---called
\textsf{QFBAPA}. The algorithm depends upon \textsf{MAXC}, an integer constant
denoting the size of the finite universe.  Our method does not depend on any
constant denoting the size of the universe. Kuncak and his colleagues have
implemented this algorithm in the Jahob system, used to check the consistency
of data structures in the Java language. Kuncak shows a few problems related to
program verification that can be solved with his algorithm. All the problems
proposed by Kuncak can also be efficiently solved by \setlog as is shown in
Section \ref{empirical}.

In a further development, \citeN{Piskac2008} give a decision
procedure for multisets with cardinality constraints by using a similar method
(i.e., encoding  input formulas as quantifier-free PA formulas); more recently
a more efficient method based on a LIA$\star$ encoding has been proposed
\cite{Piskac2020,Levatich2020}. These algorithms have been implemented in the
MUNCH \cite{Piskac2010} and ssl-reachability \cite{Piskac2020} tools which use
existing solvers to solve the various problems involved in this approach,
e.g., linear integer arithmetic. The empirical evaluation used to evaluate the
ssl-reachability tool is included in the evaluation of the implementation of
our algorithm in \setlog (cf. Section \ref{empirical}).

\citeN{Suter2011} have extended the Z3 SMT solver to
solve problems of the \textsf{QFBAPA} logic which, as said above, can be used
to encode set problems combined with PA problems through the cardinality
operator. \citeN{Bansal2018} also approach the problem of deciding
the satisfiability of finite set formulas with cardinality in the context of
SMT solvers. They propose and implement in CVC4 a calculus describing a
combination of a procedure for reasoning about membership with a procedure for
reasoning about cardinality. Their method is based on a different strategy
w.r.t. to Suter's work but it draws the concept of \emph{place} from CST
although used in an incremental way. According to Bansal and his colleagues,
Suter's method cannot scale well when the formula has set membership
constraints because these are encoded as cardinality constraints (i.e., $x \in
A \iff \{x\} \subseteq A$ and $\{x\}$ is actually a set whose cardinality is
1). Instead, they propose to avoid dealing with set membership constraints in
terms of \emph{places} or Venn regions, but to reason directly about
membership. This is aligned with how our method deals with set membership,
although we do it in terms of set unification \cite{Dovier2006}. In fact, in our
method a formula such as $x \in B \cup C$ is written as $\Cup(B,C,A) \land x
\in A$ which in turn is rewritten as $A = \{x \plus N\} \land \Cup(B,C,\{x
\plus N\})$, where $N$ is a new variable (implicitly existentially quantified)
and $\{x \plus N\}$ is a set constructor interpreted as $\{x\} \cup N$. No Venn
regions are computed when this formula is solved. Bansal et al. empirically
evaluate their method on 25 problems on program verification. The first 15 of
these problems are drawn from the evaluations performed by Kuncak and Suter on
their tools. CVC4 shows a comparable performance w.r.t. those other tools.
These 15 problems are included in the empirical evaluation of our method
reported in Section \ref{empirical}; \setlog also shows a comparable
performance. Bansal et al. also compare their method with Suter's on the
constraint $x \in A_1 \cup \dots \cup A_{21}$. As expected, Suter's method runs
out of memory after some time while CVC4 solves the formula immediately.
\setlog also solves the formula quickly and is able to return a finite
representation of all possible solutions which, as far as we know, no other
tool can do. \setlog also supports nested sets which is apparently not the case
of CVC4.

\citeN{DBLP:conf/vmcai/YessenovPK10} prove the
decidability of a theory of sets including functions, $n$-ary relations and
some operators for the algebra of relations (e.g., relational image). Then, they
show that the cardinality operator can be added to the theory preserving its
decidability.

\citeN{DBLP:journals/constraints/Azevedo07} describes the
\textit{Cardinal} system which is part of the ECLiPSe Prolog library.
\textit{Cardinal} is based on constraint propagation on set cardinality and set
interval reasoning. Methods of this kind are, in general, restricted to
formulas where  the
cardinality of each set is constrained to range over a closed integer interval.
Azevedo applies his method to some problems on digital circuits.

A proposal for extending \setlog with integers and cardinality constraints had
already been put forward in a previous work \cite{Palu:2003:IFD:888251.888272}.
In that case, however, the extension is based on the integration of CLP(FD)
into \setlog. Consequently, completeness of the solver is obtained only if
finite domains are provided for all integer variables and labelling is
performed over them. This in fact implies an upper limit for set cardinalities.
Furthermore, the presence of labeling can easily lead to unacceptable
performance.

\citeN{Alberti2017} extend linear integer arithmetic with free function symbols
and cardinality constraints for interpreted sets. Interpreted sets are sets of
the form $\{x \in [0,N)| \varphi\}$, for some $0 < N \in \nat$, and $\varphi$
is an arithmetic formula. Free unary function symbols are used to represent
array ID's. Thus, the language offers terms of the form $a(y)$ where $a$ is an
array ID and $y$ is a variable. Formulas such as $a(y) < 1$ are allowed to
occur in interpreted sets where $y$ is the bound variable. Then, the language
\emph{only} allows one to indicate the cardinality of interpreted sets, e.g.,
$\card{\{y \in [0,N) | a(y) < 1\}} = 0$. These authors prove that some
fragments of this logic are both decidable and expressive enough as to model
and reason about problems of fault-tolerant distributed systems. The
decidability results are obtained by mapping those fragments into Presburger
arithmetic enriched with unary counting quantifiers. One of the decidable
fragments has been implemented in a tool that uses the Z3 SMT solver as a
back-end solver for quantifier-free linear arithmetic. Alberti's logic does not
include classic set theoretic operators such as union. Hence, it is difficult
to compare the expressiveness of Alberti's logic with other logics analyzed in
this section and with ours. Although \setlog's intensional sets
\cite{DBLP:journals/jar/CristiaR21a} could be used to encode Alberti's
interpreted sets, it is still necessary to extend that theory as to compute the
cardinality of intensional sets. This is a line of future research.
%However, by means of Cartesian products} \setlog can potentially
%solve Alberti's problems involving sets of the form $\{x \in A | f(x) = c\}$
%because this is equivalent to $f \cap (A \times \{c\})$.

\citeN{Bender2017} extend some of the previous
results to theories where cardinalities are replaced by the more general notion
of measures. In this case a key aspect of the previous approaches is no longer
valid, namely the fact that only the empty set has  cardinality equal to 0, as
there are non-empty sets with measure 0. The theories analyzed by these authors
are important in, for example, duration calculus.

Also the Artificial Intelligence community has studied the problem of reasoning
about the size of sets, e.g.,
\cite{ding_harrison-trainor_holliday_2020,DBLP:conf/aaai/KisbyBKM20}.  We want
to remark the work by \citeN{DBLP:conf/aaai/KisbyBKM20} because
they propose two logics, combining sets with cardinality, whose decidability
can be solved in polynomial time. As expected, the gain in complexity is at the
cost of expressiveness. Nonetheless, the result may deserve being studied in
terms of software verification as it might give clues about what are the
simplest specifications and proof obligations involving sets and cardinality.
From there, compositional methods might be drawn in order to tame the
complexity constantly faced in automated program verification.

%%%%%%%%%%%%%%%%%%%%%%%%%%%%%%%%%%%%%%%%%%%%%%%%%%%%%%%%
\section{\label{concl}Concluding Remarks}
%%%%%%%%%%%%%%%%%%%%%%%%%%%%%%%%%%%%%%%%%%%%%%%%%%%%%%%%

In this paper we have presented a decision
procedure for the algebra of hereditarily finite hybrid sets extended with
cardinality constraints. The proposed
procedure is implemented within \setlog, a CLP system able to deal with a few
decidable fragments of set theory. The empirical evaluation carried out on the
implementation proves that \setlog is able to deal efficiently with formal
verification problems involving cardinality constraints.

As a future work, we plan to use this decision procedure as the base for a
decision procedure for the algebra of finite sets extended with integer
intervals. Indeed, the following identity:
\begin{equation*}
A = [m,n] \iff A \subseteq [m,n] \land \card{A} = n-m+1
\end{equation*}
becomes the key for a set unification algorithm including integer intervals
with \emph{variable} limits. In fact, it would suffice to be able to deal with
constraints of the form $A \subseteq [m,n]$ in a decidable framework to have a
decision procedure for integer intervals. In turn, integer intervals are a key
component in the definition of arrays as sets. In fact, if $array(A,n)$ is a
predicate stating that $A$ is an  array of length $n$ whose components take
values on some universe $\mathcal{U}$, then it can be defined as follows:
\begin{equation*}
array(A,n) \iff A: [1,n] \fun \mathcal{U}
\end{equation*}
\setlog already supports a broad class of set relation algebras
(\citeANP{DBLP:journals/jar/CristiaR20}
\citeyearNP{DBLP:journals/jar/CristiaR20,DBLP:conf/RelMiCS/CristiaR18}),
including partial functions and the domain operator. Hence, it would be
possible to use \setlog to automatically reason about broad classes of programs
with arrays from a set theoretic perspective which would be different from
existing approaches
\cite{DBLP:conf/lics/StumpBDL01,DBLP:conf/vmcai/BradleyMS06}.

\bigskip

\noindent\textit{Competing interests: The authors declare none}

\bigskip

\bibliographystyle{acmtrans}
\bibliography{/home/mcristia/escritos/biblio}

\appendix

%%%%%%%%%%%%%%%%%%%%%%%%%%%%%%%%%%%%%%%%%%%%%%%%%%%%%%%%%%%%%%%%%%%%%%

\section{Proofs}\label{proofs}

In this section we provide the proofs of equisatisfiability of the main rewrite
rules for the $\Size$ constraint. Note that the equisatisfiability property
for rule \eqref{size:empty} and for rule \eqref{size:zero} is trivial. Then we
give the proofs for rule \eqref{size:ext} and \eqref{size:const3}.

\begin{lemma}[Equisatisfiability of rule \eqref{size:ext}]
\begin{flalign*}
& \forall x, A, m: \\
& \t1 \Size(\{x \plus A\},m) \iff \\
&  \t2 \exists n: x \notin A \land m = 1 + n \land \Size(A,n) \\
&  \t2 \lor \exists N: A = \{x \plus N\} \land x \notin N \land \Size(N,m)
\end{flalign*}
\end{lemma}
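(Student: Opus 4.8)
The plan is to prove the biconditional semantically, by unfolding the interpretations of $\Size$ and $\{\cdot \plus \cdot\}$ from Section \ref{semantics} and then splitting on a single case distinction. Writing $\iS$ for the interpretation structure, the left-hand side $\Size(\{x \plus A\},m)$ asserts $\card{\{x^\iS\} \cup A^\iS} = m^\iS$, while the two disjuncts on the right are governed by the mutually exclusive and exhaustive alternatives $x^\iS \notin A^\iS$ and $x^\iS \in A^\iS$. First I would therefore fix an arbitrary assignment to $x$, $A$ and $m$ and argue both implications within each of these two cases; the fresh variables $n$ and $N$ occurring only on the right are treated, as in Definition \ref{d:rw_rules}, as implicitly existentially quantified, so in the forward direction it suffices to exhibit witnesses and in the backward direction one may assume them given.

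The case $x^\iS \notin A^\iS$ is routine. Here the union $\{x^\iS\} \cup A^\iS$ is disjoint, so cardinality is additive and $\card{\{x^\iS\} \cup A^\iS} = 1 + \card{A^\iS}$. For the forward direction I would take the witness $n$ with $n^\iS = \card{A^\iS}$, for which $x \notin A$, $m = 1 + n$ and $\Size(A,n)$ all hold simultaneously, giving the first disjunct; the backward direction simply reverses this computation, reading off $m^\iS = 1 + \card{A^\iS}$ from $m = 1+n$ and $\Size(A,n)$.

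The hard part will be the case $x^\iS \in A^\iS$, which has to be reconciled with the second disjunct. By the absorption identity \eqref{Ab} we have $\{x^\iS\} \cup A^\iS = A^\iS$, so the left-hand side collapses to $\card{A^\iS} = m^\iS$. For the forward direction I would exhibit the witness $N$ by setting $N^\iS = A^\iS \setminus \{x^\iS\}$, which validates the guard $A = \{x \plus N\} \land x \notin N$ and, the remaining union being disjoint, gives $\card{A^\iS} = 1 + \card{N^\iS}$. The delicate point, and the main obstacle, is the cardinality accounting that has to reconcile the residual $\Size$ constraint on the smaller set $N$ with the target value $m$ under this decomposition; this is precisely where the guard $x \notin N$ earns its place, both making the extraction of $N$ from $A$ well-defined and, exactly as noted after rule \eqref{un:ext1}, blocking the generation of ever-larger terms denoting the same set. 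Getting this bookkeeping to line up in both directions --- so that the constraints of the second disjunct hold under $\iS$ if and only if $\card{\{x^\iS\} \cup A^\iS} = m^\iS$ --- is the crux of the lemma; once it is settled, no induction on the structure of $A$ is needed, since the whole argument lives at the level of the fixed interpretation $\iS$ and uses only additivity of cardinality over disjoint unions together with \eqref{Ab}.
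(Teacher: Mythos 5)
Your handling of the case $x^\iS \notin A^\iS$ is correct and coincides with the paper's own argument: disjoint additivity of $\card{\cdot}$ plus the witness $n^\iS = \card{A^\iS}$. The genuine gap is in the case $x^\iS \in A^\iS$, where you explicitly defer ``the cardinality accounting'' as the crux and never discharge it. That step cannot be discharged for the statement as written. The guard $A = \{x \plus N\} \land x \notin N$ forces the unique witness $N^\iS = A^\iS \setminus \{x^\iS\}$, hence $\card{A^\iS} = 1 + \card{N^\iS}$; on the other side, absorption gives $\card{\{x^\iS\} \cup A^\iS} = \card{A^\iS} = m^\iS$. So the left-hand side yields $\card{N^\iS} = m^\iS - 1$, while the conjunct $\Size(N,m)$ demands $\card{N^\iS} = m^\iS$. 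Concretely, for $x = 1$, $A = \{1\}$, $m = 1$ the left-hand side holds but neither disjunct does, and for $m = 0$ the second disjunct holds (with $N = \e$) while the left-hand side fails. The equivalence is off by one in this branch, which is exactly why your bookkeeping refuses to ``line up in both directions.''

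The repair is to end the second branch with a $\Size$ constraint on $A$ itself rather than on $N$: from $x^\iS \in A^\iS$ one gets $\Size(\{x \plus A\},m) \iff \Size(A,m)$, and since $A = \{x \plus N\}$ with $x \notin N$ this is $\Size(\{x \plus N\},m)$, equivalently $m = 1 + n \land \Size(N,n)$ for a fresh $n$. This is what the worked example following rule \eqref{size:ext} describes (the second choice for $\Size(\{1,2,3,1,4\},m)$ is $\Size(\{2,3,1,4\},m)$), and it is also where the paper's own proof in fact stops --- its chain of equivalences for the $x \in A$ case terminates at $\Size(A,m)$ and never produces the literal conjunct $\Size(N,m)$. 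With the disjunct read that way, your witness $N^\iS = A^\iS \setminus \{x^\iS\}$ closes the forward direction immediately, uniqueness of that $N$ gives the backward direction, and you are right that no induction is needed. As it stands, however, your proposal halts precisely at the identity $\card{N^\iS} = m^\iS$ that it would need to prove and that is false.
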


\begin{proof}
First, assume $x \notin A$.
\begin{flalign*}
& \Size(\{x \plus A\},m) \\
& \iff \card{\{x \plus A\}} = m \why{semantics of $\Size$} \\
& \iff \card{\{x\}\cup A} = m
     \why{semantics of $\{\cdot\plus\cdot\}$} \\
& \iff \card{\{x\}} + \card{A} = m
     \why{$x \notin A$ and property $\card{\cdot}$} \\
& \iff 1 + \card{A} = m \why{property of $\card{\cdot}$} \\
& \iff 1 + n = m \land n = \card{A} \why{substitution} \\
& \iff 1 + n = m \land \Size(A,n) \why{semantics of $\Size$}
\end{flalign*}

Now, assume $x \in A$. Then, take $N = A \setminus \{x\}$. Trivially, $A =
\{x\} \cup N$ and $x \notin N$. Now, $A = \{x \plus N\}$ \by{semantics of
$\{\cdot\plus\cdot\}$}. Finally:
\begin{flalign*}
& \Size(\{x \plus A\},m) \\
& \iff \card{\{x \plus A\}} = m \why{semantics of $\Size$} \\
& \iff \card{\{x\}\cup A} = m
     \why{semantics of $\{\cdot\plus\cdot\}$} \\
& \iff \card{A} = m
     \why{$x \in A \implies \{x\} \cup A = A$} \\
& \iff \Size(A,m) \why{semantics of $\Size$}
\end{flalign*}

And this finishes the proof.
\end{proof}

\begin{lemma}[Equisatisfiability of rule \eqref{size:const3}]
\begin{flalign*}
& \forall A, c: c > 0 \implies \\
& \t1 \Size(A,c) \iff \exists y_1,\dots,y_c: A = \{y_1,\dots,y_c\}
  \land ad(y_1,\dots,y_c)
\end{flalign*}
where:
\[
ad(y_1,\dots,y_c) \defs \bigwedge_{i=1}^{c-1}
\bigwedge_{j=i+1}^c y_i \neq y_j
\]
\end{lemma}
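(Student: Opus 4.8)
The plan is to establish the biconditional directly from the semantics, in the same style used for rule \eqref{size:ext}, by proving the two implications separately. The common core is the observation that, by the interpretation of $\{\cdot \plus \cdot\}$, an extensional term $\{y_1,\dots,y_c\}$ denotes the set $\{y_1^\iS\} \cup \cdots \cup \{y_c^\iS\}$, together with the elementary fact of finite set theory that such a union has cardinality exactly $c$ \emph{iff} the values $y_1^\iS,\dots,y_c^\iS$ are pairwise distinct. The hypothesis $c > 0$ is needed only to guarantee that $\{y_1,\dots,y_c\}$ is a well-formed extensional set term (the shorthand of Section \ref{syntax} requires at least one element); the case $c = 0$ is covered by rules \eqref{size:empty} and \eqref{size:zero}.

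For the ($\Leftarrow$) direction I would assume a valuation satisfying $A = \{y_1,\dots,y_c\} \land ad(y_1,\dots,y_c)$. By the semantics of $ad$ we have $y_i^\iS \neq y_j^\iS$ for every $1 \le i < j \le c$, so the elements of $A^\iS = \{y_1^\iS,\dots,y_c^\iS\}$ are pairwise distinct and hence $\card{A^\iS} = c$; by the semantics of $\Size$ this is exactly $\Size(A,c)$. This direction is a short chain of equivalences and presents no real difficulty.

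For the ($\Rightarrow$) direction I would assume $\Size(A,c)$, i.e.\ $\card{A^\iS} = c$. Since $A^\iS \in D_\sSet$ is a hereditarily finite set of cardinality $c$, it can be enumerated as $A^\iS = \{a_1,\dots,a_c\}$ with the $a_i$ pairwise distinct elements of $D$. I would then assign to each (fresh, existentially quantified) variable $y_i$ the value $a_i$, choosing $y_i$ of sort $\sSet$, $\sInt$ or $\sUr$ according to which component of $D = D_\sSet \cup D_\sInt \cup D_\sUr$ contains $a_i$; this is possible because $\Var$ supplies denumerably many variables of every sort. Under this valuation $ad(y_1,\dots,y_c)$ holds (the $a_i$ are pairwise distinct) and $\{y_1,\dots,y_c\}^\iS = \{a_1,\dots,a_c\} = A^\iS$, so $A = \{y_1,\dots,y_c\}$ holds as well, witnessing the right-hand side.

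The main obstacle, such as it is, is making the counting step precise: that a finite set written as $\{a_1,\dots,a_c\}$ has cardinality $c$ exactly when the $a_i$ are distinct. This is where the absorption axiom \eqref{Ab} does the work, since repeated elements collapse and strictly lower the cardinality; distinctness is used to guarantee $\card{A^\iS} = c$ in one direction, and its contrapositive rules out any satisfying valuation with a repeated witness $y_i$. The only other point needing care is the sort-correct choice of the witnesses $y_i$ in the forward direction, which is routine given the multi-sorted domain $D$.
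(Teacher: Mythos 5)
Your proposal is correct and follows essentially the same route as the paper: the paper's proof is a two-step chain of equivalences unfolding the semantics of $\Size$ and of $\card{\cdot}$ under the hypothesis $c > 0$, which is exactly the argument you give, only spelled out in both directions with explicit witness selection and the distinctness-versus-cardinality bookkeeping. No gaps; you have simply made the paper's terse semantic argument fully explicit.
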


\begin{proof}
\begin{flalign*}
& \Size(A,c) \\
& \iff \card{A} = c \why{semantics of $\Size$} \\
& \iff A = \{y_1,\dots,y_c\} \land ad(y_1,\dots,y_c)
     \why{semantics of $\card{\cdot}$ and $c > 0$}
\end{flalign*}
for some elements $y_1,\dots,y_c$.
\end{proof}

%%%%%%%%%%%%%%%%%%%%%%%%%%%%%%%%%%%%%%%%%%%%%%%%%%%%%%%%%%%%%%%%%%%%%%

\section{Mapping $\LCARD$ Formulas into $\LZa$ Formulas}\label{mapping}

In this section we define a mapping of $\LCARD$ formulas into $\LZa$
formulas. Actually, in order to justify Theorem \ref{satisf}, we only need to
map the $\LCARD$ formulas in irreducible form that are passed in to $\Zarba$.
Indeed, the implementation of $\Zarba$ is called on $\LCARD$ formulas in
irreducible form, as explained in Section \ref{sizesolver}.

Hence, we define a function, $\map$, that takes $\LCARD$ terms, constraints or
formulas in irreducible form and returns $\LZa$ terms, constraints or formulas.

\paragraph{Variables.}
Variables are mapped onto themselves taking care of their sort:
\begin{equation*}
\map(x) \defs x \text{, if $x \in \Var$}
\end{equation*}

\paragraph{Ur-elements.}
Ur-elements are mapped onto themselves:
\begin{equation*}
\map(x) \defs x \text{, if $x$ is of sort $\Ur$}
\end{equation*}

\paragraph{Integer terms.}
As $\LZa$ only provides the constants 0 and 1, the mapping of $n \in \num$ is
as follows:
%\begin{gather*}
\begin{flalign*}
  \quad\quad & \map(0) \defs 0 & \\
  & \map(n) \defs \overbrace{1+\dots+1}^n = \sum_{i=1}^n 1 \text{, for $n \neq
  0$} &
%\end{gather*}
\end{flalign*}

$\LZa$ does not provide the integer product. However, recall that $\LCARD$
admits only linear terms so in $n*m$ at least one is a constant; if it is $m$,
then we first switch the term as $m*n$. In this case the mapping for integer
linear terms is as follows:
%\begin{gather*}
\begin{flalign*}
  \quad\quad & \map(-m) \defs -\map(m) & \\
 & \map(n + m) \defs \map(n) + \map(m) & \\
 & \map(n - m) \defs \map(n) - \map(m) & \\
 & \map(n * m) \defs \overbrace{\map(m) + \dots + \map(m)}^n = \sum_{i=1}^n
\map(m) &
%\end{gather*}
\end{flalign*}

\paragraph{Integer constraints.}
%\begin{gather*}
\begin{flalign*}
  \quad\quad & \map(n = m) \defs \map(n) = \map(m) & \\
 & \map(n \leq m) \defs \map(n) < \map(m) \lor \map(n) = \map(m) &
%\end{gather*}
\end{flalign*}

\paragraph{Set terms.}
Recall that we only need to map set terms in irreducible form
except those at the right of an equality of the form $\dot{X} = t$. This means
that, actually, we do not need to map any set term.

\paragraph{Set constraints.}
Again, we only need to map set constraints appearing in irreducible form.
Moreover, we do not need to map constraints based on $=$, $\notin$ and $\neq$,
as explained in Section \ref{integrating}. Therefore, we only need to map
constraints based on $\Cup$, $\disj$ and $\Size$.
%\begin{gather*}
\begin{flalign*}
  \quad\quad & \map(\Cup(A,B,C)) \defs \map(C) = \map(A) \cup \map(B) & \\
 & \map(A \disj B) \defs \map(A) \cap \map(B) = \emptyset & \\
 & \map(\Size(A,K)) \defs \card{\map(A)} = \map(K) &
%\end{gather*}
\end{flalign*}

\paragraph{Formulas.}
The irreducible form is a conjunction of constraints in irreducible form. Then,
we only need to map conjunctions of constraints.
\begin{equation*}
\map(p \land q) \defs \map(p) \land \map(q)
\end{equation*}

%%%%%%%%%%%%%%%%%%%%%%%%%%%%%%%%%%%%%%%%%%%%%%%%%%%%%%%%%%%%%%%%%%%%%%

\section{\label{app:datacontainer}A Simple \setlog Program}

The following \setlog program models a simple data container and its cache.  As
long as the container \verb+Cont+ holds at most \verb+N+ elements its cache
\verb+Cache+ holds the same elements; when \verb+Cont+ grows beyond \verb+N+,
\verb+Cache+ contains only \verb+N+ elements. In this model, both \verb+Cont+
and \verb+Cache+ are sets.
\begin{verbatim}
cache(Cont,N,Cache) :-
  0 < N &
  size(Cont,S) &
  (S =< N &
   Cache = Cont
   or
   S > N &
   un(Rest,Cache,Cont) &
   disj(Rest,Cache) &
   size(Cache,N)
  ).
\end{verbatim}

In this way, we can run queries to play with \verb+cache+:
\begin{verbatim}
{log}=> cache({1,b,[2,q]},2,Cache).

Cache = {b,[2,q]}

Another solution?  (y/n)
Cache = {1,[2,q]}

Another solution?  (y/n)
Cache = {1,b}

Another solution?  (y/n)
no
\end{verbatim}
Given that \verb+Cont+ and \verb+Cache+ are sets, \verb+cache+ returns several
solutions where \verb+Cache+ holds different elements of \verb+Cont+. In other
words, this model of the system is non-deterministic as we cannot say what are
the first elements to be put in the cache. Determinism can be imposed by
calling \verb+cache+ in this way:
\begin{verbatim}
{log}=> cache({1,b,[2,q]},2,C)!.

C = {b,[2,q]}

Another solution?  (y/n)
no
\end{verbatim}

\setlog can be used to prove that \verb+cache+ verifies some properties.  For
example, if \verb+M+ is the size of \verb+Cont+ and we have that \verb+N < M+
then \verb+Cache+ is a non-empty set. This is proved by running a query
representing the negation of this property:
\begin{verbatim}
{log}=> cache(Cont,N,Cache) & size(Cont,M) & N < M & Cache = {}.
\end{verbatim}
In which case \setlog answers \verb+no+ meaning the query cannot be satisfied.

%%%%%%%%%%%%%%%%%%%%%%%%%%%%%%%%%%%%%%%%%%%%%%%%%%%%%%%%%%%%%%%%%%%%%%

\section{\label{app:experiments}Technical details of the empirical evaluation}

The experiments described in Section \ref{empirical} were performed on a
Latitude E7470 (06DC) with a 4 core Intel(R) Core\texttrademark{} i7-6600U CPU
at 2.60GHz with 8 Gb of main memory, running Linux Ubuntu 18.04.5 (LTS) 64-bit
with kernel 4.15.0-135-generic. \setlog 4.9.8-7g over SWI-Prolog
(multi-threaded, 64 bits, version 7.6.4) was used during the experiments.

Each \setlog formula was run within the following Prolog program:
\begin{verbatim}
   consult('setlog.pl').
   set_prolog_flag(answer_write_options,[max_depth(0)]).
   set_prolog_flag(toplevel_print_options,
                   [quoted(true),
                    portray(true), spacing(next_argument)]).
   time(once(rsetlog(<FORMULA>), 2000,__C,__R,[]))).
\end{verbatim}
where \verb+<FORMULA>+ is replaced by each formula, \verb+2000+ is the timeout
(in milliseconds), and \verb+__C+ and \verb+__R+ are used to get the result of
the execution. Each of these programs was run from the command line as follows:
\begin{verbatim}
   prolog -q < <PROG>
\end{verbatim}
The execution time is the one printed by the \verb+time/1+ predicate.

%%%%%%%%%%%%%%%%%%%%%%%%%%%%%%%%%%%%%%%%%%%%%%%%%%%%%%%%%%%%%%%%%%%%%%

\section{Inequality elimination ($\mathsf{remove\_neq}$)}\label{inequalities}

The $\CARD$-formula returned by Algorithm \ref{glob} when $\mathsf{STEP_S}$
reaches a fixpoint is not necessarily satisfiable.

\begin{example} [Unsatisfiable formula returned by
$\mathsf{STEP_S}$]\label{ex:cupcup} The $\CARD$-formula:
\begin{equation}\label{eq:ex1}
\Cup(A,B,C) \land \Cup(A,B,D) \land C \neq D
\end{equation}
cannot be further rewritten by any of the rewrite rules considered above.
Nevertheless, it is clearly unsatisfiable. \qed
\end{example}

In order to guarantee that $\SATCARD$ returns either $\false$ or satisfiable
formulas (see Theorem \ref{satisf}), we still need to remove all inequalities
of the form $\dot{A} \neq t$, where $\dot{A}$ is of sort $\sSet$, occurring as
an argument of $\CARD$-constraints based on $\Cup$ or $\Size$. This is
performed (see Algorithm \ref{glob}) by executing the routine
\textsf{remove\_neq}, which applies the rewrite rule described by the generic
rule scheme of Figure \ref{fig:rules_neq_elim}. Basically, this rule exploits
set extensionality to state that two sets that differ can be distinguished by
asserting that a fresh element ($\dot n$) belongs to one but not to the other.
Notice that the third disjunct is necessary when $t$ is a non-set term. In this
case the second disjunct is false while the first disjunct forces $\dot{A}$ to
contain an element $\dot{n}$; so without the third disjunct we would miss the
solution $\dot{A} = \e$.

\begin{figure}
\hrule\vspace{3mm}
 \raggedright
 \quad\quad If $A \in \Var_\Set$; $t : \langle
 \{\sSet,\sUr\} \rangle$; $\Phi$ is the input formula then:
\begin{flalign*}
 \quad\quad \quad\quad & \text{If $\dot{A}$ occurs as an argument of a $\pi$-constraint,
    $\pi \in \{\Cup, \Size\}$, in $\Phi$:} & \\
 & \dot{A} \neq t \lfun (\dot{n} \in \dot{A} \land \dot{n} \notin t) \lor
    (\dot{n} \in t \land \dot{n} \notin \dot{A}) \lor (\dot{A} = \e \land t \neq
 \e) &
%\tag{$\neq_S$}
 \label{rule:neq_elim}
\end{flalign*}
\hrule
 \caption{Rule scheme for $\neq$ constraint elimination rules}
\label{fig:rules_neq_elim}
\end{figure}

\begin{example}[Elimination of $\neq$ constraints]
The $\CARD$-formula of Example \ref{ex:cupcup} is rewritten to (we do not
consider the third disjunct as $C$ and $D$ are set variables):
\begin{flalign*}
\quad\quad & \Cup(A,B,C) \land \Cup(A,B,D) \land C \neq D \lfun & \\
 & \Cup(A,B,C)
  \land \Cup(A,B,D)
  \land (\dot{n} \in C \land \dot{n} \notin D \lor \dot{n} \notin C \land \dot{n} \in D)
  \lfun & \\
 & \Cup(A,B,C)
  \land \Cup(A,B,D)
  \land \dot{n} \in C \land \dot{n} \notin D & \\
 & {}\lor & \\
 & \Cup(A,B,C)
  \land \Cup(A,B,D) \land \dot{n} \notin C \land \dot{n} \in D &
\end{flalign*}
Then, the $\in$ constraint in the first disjunct is rewritten into a $=$
constraint (namely, $C = \{\dot{n} \plus \dot{N}\}$), which in turn is
substituted into the $\Cup$ constraints, which in turn are further rewritten by
rules such as those shown in Figure \ref{f:clpset} and \cite{Dovier00}. This
process will eventually return $\false$, at which point the second disjunct is
processed in a similar way. \qed
\end{example}

\end{document}